\newcommand{\remove}[1]{}
\newtheorem{proposition}{Proposition}
\newtheorem{corollary}{Corollary}
\newtheorem{definition}{Definition}
\newtheorem{lemma}{Lemma}
\newtheorem{theorem}{Theorem}
\newtheorem{remark}{Remark}
\newenvironment{proof}{\noindent{\bf Proof\@:}}{\hfill $\Box$\\}
\newenvironment{propositionproof}[1]{\noindent{\bf Proof of Proposition #1\@:}}{\hfill $\Box$\\}
\newcommand{\myshow}[1]{}
\date{\today}
\date{}
\title{Broadcasting colourings on trees.\\
A combinatorial view.\thanks{Supported by EPSRC grant EP/G039070/2 and DIMAP.}\\
}
\author{Charilaos Efthymiou\\
Goethe University, Mathematics Institute, Frankfurt 60054, Germany\\	
{\tt efthymiou@gmail.com}}
\begin{document}

\maketitle

 \begin{abstract}
The broadcasting models on a $d$-ary tree $T$ arise in many contexts such as discrete mathematics,
biology, information theory,  statistical physics and computer science. We consider the $k$-colouring 
model, i.e. the root of $T$ is assigned an arbitrary colour and, conditional on this assignment, 
we take a random colouring of $T$.  A basic question here is whether the information of the assignment 
at the root affects the distribution of the colourings at the leaves.  This is the so-called 
{\em reconstruction/non-reconstruction problem}. It is well known that $d/\ln d$ is a threshold function 
for this problem, i.e.
\begin{itemize}
 \item if $k\geq (1+\epsilon)d/\ln d$, then the colouring of the root has a vanishing effect
 on the distribution of the colourings at the leaves, as the height of the tree grows
 \item if $k\leq (1-\epsilon)d/\ln d$, then   the colouring of the root biases the distribution of the colouring
       of the leaves regardless of the height of the tree.
\end{itemize}
However, there is no apparent {\em combinatorial} reason why such a result should be true.

When $k\geq (1+\epsilon)d/\ln d$, the threshold implies the following: We can {\em couple} two broadcasting
processes that assign the root different colours such that the probability of having disagreement
at the leaves reduces with their distance from the root. It is natural to perceive such coupling as a mapping
from the colouring of the first broadcasting process to the colouring of the second one. In
that terms, here, we study how can we have such a mapping ``combinatorially''. 
Devising a mapping where the disagreements vanish as we move away from the root turns out
to be a non-trivial task to accomplish for any $k\leq d$.

In this work we obtain a coupling which has the aforementioned property for any 
$k>3d/\ln d$, i.e. much smaller than $d$.  Interestingly enough, the decisions that
we make in the coupling are somehow {\em local}. It is not clear clear whether such 
a coupling should be local for any $k$ down to $d/\ln d$. Finally, we relate our result to sampling 
$k$-colourings of sparse random graphs, with expected degree $d$ and $k\leq d$.
\end{abstract}

\section{Introduction}		

The broadcasting models on trees and the closely related reconstruction problem were 
originally studied in {\em statistical physics}. Since then they have found applications
in other areas including {\em biology} (in phylogenetic reconstruction \cite{OptPhylogeny,PhaseTransPhylog}), 
{\em communication theory} (in the  study of noisy computation \cite{IsingTree}). Very 
impressively, these models arise in {\em computer science} in the study of {\em  random
constraints satisfaction problems} such as random $k$-SAT, random graph colouring etc. 
That is, the models on trees seem to capture some of the most fundamental properties 
of the corresponding models on random (hyper)graphs, \cite{1RSBPaper}.

The most basic problem in the study of broadcasting models is to determine
the reconstruction/non-reconstruction threshold. I.e. whether the configuration 
of the root affects the distribution of the configuration of the leaves of
the tree. The transition from non-reconstruction to reconstruction can be
achieved by adjusting appropriately the parameters of the model. Typically,
this transition exhibits a {\em threshold behaviour}. So far, the main focus of
the study was to determine the precise location of this threshold for various models.

In this work, we focus on the colouring model on a $d$-ary tree. The reconstruction/non-reconstruction 
threshold for this model is known precisely \cite{InfFlowTrees,Semerjian,SlyRecon,TreeNonNaya}.
We investigate the phenomenon further by searching for a {\em combinatorial} reason
why the information decays in the non-reconstruction regime. Such an explanation, somehow, has been 
elusive when $k\leq d$. For the reconstruction regime combinatorial explanation is already known 
\cite{InfFlowTrees,Semerjian}.

Let us be more specific on what do we mean by combinatorial explanation. The threshold implies that
when $k\geq (1+\epsilon)d/\ln d$ we can couple two broadcasting  processes that assign the root different 
colours such that the probability of having disagreements at the leaves reduces as their distance from 
the root increases. 
It is  natural to perceive such coupling as a mapping from the colouring of the first broadcasting 
process to the colouring of the second one. In that  terms, here, we study how can we have such a 
mapping combinatorially.

We provide a coupling between two broadcasting processes which implies non-reconstruction
for $k$ well below $d$, i.e. for  $k>3d/\ln d$. It is based on describing a (combinatorial)
mapping between the colourings of two different broadcasting processes. It works inductively 
and considers two levels of the underlying tree each time.  E.g. given the colour assignments 
of the root in the two processes the coupling considers only colour choices for the vertices 
up to two levels below. The basic idea is to reveal partially some information for the decisions
of the two processes and investigates for which (small) subtrees of the root the colour assignments
at their leaves are identically distributed (conditional the revealed information).

Even though the coupling we present here is not  optimal, a lot of its basic ideas are quite 
natural. It seems reasonable to expect that an optimal coupling should adopt  a lot of them.
Finally, recent advances in sampling colouring algorithms (see \cite{mysampling}) relate this
coupling to sampling $k$-colourings of random graphs of expected degree $d$ when $k<d$ (see
Section \ref{sec:FurtherMot}).

\subsection{The model and the reconstruction problem}

The broadcasting models on a tree $T$ are  models in which information is sent from the
root over the edges to the leaves. We assume that the  edges represent noisy channels.
For some finite set of spins  ${\Sigma}=\{1,2,\ldots, k\}$, a configuration on $T$ is 
an element of $\Sigma^T$, i.e. it is an assignment of spins to the vertices of $T$. 
The spin of the root $r$ is chosen according to some initial distribution over $\Sigma$.  The information 
propagates along the edges of the tree as follows: There is a $k \times k$ stochastic  matrix 
$M$ such that if the vertex $v$ is assigned spin $i$, then its child $u$ is assigned spin $j$  
with probability $M_{i,j}$.

Our focus is on the {\em $k$-colouring} model (or $k$-state {\em Potts} model at zero temperature).
We assume that the underlying tree $T$ is a complete $d$-ary tree of {\em height} $h$
and for the matrix $M$  we have that
\begin{displaymath}
M_{i,j}=\left \{
\begin{array}{lcl}
\frac{1}{k-1} &\quad & \textrm{for $i\neq j$}\\
0 && \textrm{otherwise.}
\end{array}
\right.
\end{displaymath}
Broadcasting models give rise to Gibbs measures on trees. E.g. for the colouring model, assuming 
that the broadcasting process over $T$ starts with root $r$ coloured $i$, then the $k$-colouring 
we get after the processes has finished is a random $k$-colouring of $T$ conditional that $r$ is 
coloured $i$.

We let $L_h$ denote the {\em leaves}  of $T$. Also, we let $\mu_i$ denote the uniform
distribution over the $k$-colourings of $T$ conditional  that $r$ is assigned colour $i$.
Reconstructibility is defined as follows: 

\begin{definition}\label{def:Reconstuction}
For any $i,j\in [k]$ let $||\mu_i-\mu_j||_{L_h}$ denote the total variation distance
of the projections of $\mu_i$ and $\mu_j$ on $L_{h}$.  We say that a model is {\em reconstructible} on 
a tree $T$ if there exists $i,j\in [k]$ for which
\begin{displaymath}
\lim_{h\to \infty}||\mu_i-\mu_j||_{L_h}>0.
\end{displaymath}
When the above limit is zero for every $i,j$, then we say that the model has
{\em non-reconstruction}.
\end{definition}

\noindent
(Non)Reconstructibility expresses how information decays along the tree. As a matter of fact, 
non-reconstruction is equivalent to the mutual information between the colouring of root $r$ 
and that of $L_h$ is going to zero as $h$ grows (see \cite{Beating2ndEigen}).

When $T$ is infinite ($h\to \infty$)  non-reconstruction is equivalent to the Gibbs measure 
being {\em extremal}. That is, the distribution of the colouring at the root $r$ cannot be 
expressed as a convex combination of boundary conditions at the leaves of $T$ (see \cite{Georgii}).
For finite $h$, non-reconstruction implies that {\em typical} colourings of the leaves have 
a vanishing bias on the distribution of the colouring of $r$.

An early result about reconstruction/non-reconstruction problems on trees is the so called 
``Kesten-Stigum bound'' in \cite{Kesten}. The authors there show that reconstruction holds
when $\lambda^2 d>1$, where $\lambda$ is the second largest eigenvalue of $M$ in absolute value.
This bound is sharp for a lot of models, e.g. Ising model (see \cite{IsingTree}).  In \cite{Beating2ndEigen}
it was shown that there are models where the  Kesten-Stigum bound is not sharp, e.g. the 
binary models where $M$ is sufficiently asymmetric or the ferromagnetic $q$-state Potts model
with $q$ large. As far as the $k$-colouring model is regarded the  reconstruction 
threshold is known quite precisely. From \cite{InfFlowTrees,Semerjian,SlyRecon,TreeNonNaya} we derive 
the following theorem:

\begin{theorem}\label{thrm:NonRecon}
For fixed $\epsilon>0$ and sufficiently large $d$, the following is true 
for the $k$-colouring model on a $d$-ary tree $T$:
\begin{itemize}
\item If $k\geq (1+\epsilon)d/\ln d$, then the model is non-reconstructible.
\item If $k\leq (1-\epsilon)d/\ln d$, then the model is reconstructible.
\end{itemize}
\end{theorem}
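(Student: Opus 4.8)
The plan is to treat the two regimes separately, since the underlying phenomena—and hence the natural proof techniques—are genuinely different, and in both directions the governing mechanism is a coupon-collector / covering phenomenon rather than a spectral one. It is worth recording at the outset that the second largest eigenvalue of $M$ in absolute value is $\lambda=1/(k-1)$, so the Kesten--Stigum bound $\lambda^2 d>1$ sits at $k\approx\sqrt d$, which is far below the true threshold $d/\ln d$. This already tells us that reconstruction persists well outside the KS regime, and therefore that any successful argument must exploit the combinatorial structure of proper colourings and cannot rely on second moments of the leaf census alone.

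For the reconstruction direction ($k\leq(1-\epsilon)d/\ln d$) I would build an explicit leaves-to-root estimator and show that it beats random guessing uniformly in $h$. The key is the \emph{freezing} event: a vertex $v$ coloured $\sigma_v$ has $d$ children, each receiving an essentially uniform colour in $\Sigma\setminus\{\sigma_v\}$, so the colour $\sigma_v$ is \emph{forced} by the children exactly when every one of the $k-1$ colours $c\neq\sigma_v$ appears on at least one child. Rewriting the hypothesis equivalently as $d\geq(1+\delta)(k-1)\ln(k-1)$ for some $\delta=\delta(\epsilon)>0$—which follows from $(k-1)\ln(k-1)=(1-\epsilon+o(1))d$—a coupon-collector estimate shows that this covering event occurs with probability bounded away from $0$. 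One then propagates the recursive rule ``declare $v$ to be the unique colour absent from its reliably-decoded children'' from the leaves upward, and controls the probability $p_h$ that the root is decoded correctly through a recursion $p_h=F(p_{h-1})$ that also accounts for error propagation. Showing that $F$, started from the boundary value $p_0=1$, has an attracting fixed point bounded away from $1/k$, so that $\liminf_h p_h>1/k$, is the crux here.

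For the non-reconstruction direction ($k\geq(1+\epsilon)d/\ln d$) I would instead show that the posterior of the root colour given the leaves converges to uniform, equivalently $\lim_h\norm{\mu_i-\mu_j}_{L_h}=0$, via a distributional recursion. Let $X^{(h)}\in\RR^k$ be the random posterior vector $X^{(h)}_c=\Pr{\sigma_r=c\mid \sigma_{L_h}}$, which obeys a Bayes recursion: the root posterior is the normalized coordinatewise product, tilted by $M$, of the $d$ independent child posteriors. I would track a potential such as $\Phi_h=\Erw\sum_c(X^{(h)}_c-1/k)^2$ and aim to prove $\Phi_h\to0$. The naive linearization only reproduces the non-tight spectral rate, so the essential step is that in this regime $d<(k-1)\ln(k-1)$, whence typically \emph{not} all colours are seen among a vertex's children; this persistent covering failure leaves genuine slack that drives $\Phi_h$ contractively to $0$. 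Quantifying that contraction is the main obstacle, and it is where the $\ln d$ factor re-enters from the opposite side.

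The hard part, in both directions, is matching the constant in front of $d/\ln d$: the crude covering and second-moment bounds sketched above lose constant factors, so sharpening them to pin the threshold at $(1\pm\epsilon)d/\ln d$ (rather than at some $Cd/\ln d$) requires the refined concentration and recursion analyses of \cite{InfFlowTrees,Semerjian,SlyRecon,TreeNonNaya}. I would therefore assemble the theorem by importing the reconstruction bound from \cite{InfFlowTrees,Semerjian} and the non-reconstruction bound from \cite{SlyRecon,TreeNonNaya}, after checking that the normalizations there (number of children $d$ versus branching number, $k$ versus $k-1$) are consistent with the statement above.
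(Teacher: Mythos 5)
Your proposal is correct and takes essentially the same route as the paper: the paper does not prove Theorem \ref{thrm:NonRecon} itself but derives it by citation, importing the reconstruction bound from \cite{InfFlowTrees,Semerjian} (obtained there by analysing exactly the simple freezing/coupon-collector estimator you sketch) and the non-reconstruction bound from \cite{SlyRecon,TreeNonNaya}, which is precisely how you assemble the theorem in your final paragraph. Your preliminary sketches of the two directions, and your observation that the Kesten--Stigum bound $\lambda^2 d>1$ with $\lambda=1/(k-1)$ is far from tight here, accurately reflect the content of the paper's surrounding remarks, including the sharper conditions $d>k[\ln k+\ln\ln k+1+o(1)]$ and $d\leq k[\ln k+\ln\ln k+1-\ln 2-o(1)]$.
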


\begin{remark} \em
The reconstruction bound is from \cite{InfFlowTrees,Semerjian}
and is based on analyzing  a simple reconstruction algorithm. As a matter
of fact the reconstruction condition there is more precise than that in
Theorem \ref{thrm:NonRecon}, i.e. it should hold $d>k[\ln k+\ln\ln k+1+o(1)]$.
\end{remark}

\begin{remark}\em
The non-reconstruction  bound is from \cite{SlyRecon, TreeNonNaya}.
The result in \cite{SlyRecon} provides a very precise condition for 
non-reconstruction, i.e. $d\leq k[\ln k+\ln\ln k+1-\ln 2-o(1)]$.
In \cite{TreeNonNaya} the reader can find further interesting results 
about the problem.
\end{remark}

\noindent
Using the Coupling Lemma (see \cite{coupling-lemma})  with  Theorem \ref{thrm:NonRecon} we get the 
following corollary.

\begin{corollary}\label{cor:TheCoupling}
Consider a $d$-ary tree $T$ of height $h$. Assume that two broadcasting processes on $T$
assign the root different colours. For $\epsilon$ and $d$  as in Theorem \ref{thrm:NonRecon}
and $k=(1+\epsilon)d/\ln d$  there is a coupling for the two processes such that 
the following holds: The probability that there are leaves  with different colour 
assignments in the two processes reduces as $h$ increases.
\end{corollary}

\noindent
Somehow there is a rule which specifies how someone should correspond the choices of colourings 
in the first broadcasting process to the choices of the other one such that the probability of
having the leaves taking different colours reduces with their distance from the root.
Unfortunately, neither of \cite{SlyRecon, TreeNonNaya} casts a light on this question.
It turns out that devising such a coupling is far from trivial for any $k \leq d$.

Here we address the problem of constructing a coupling as specified in Corollary \ref{cor:TheCoupling}, 
based on  {\em local combinatorial} rules. By local we mean that once the first process decides on
the colouring of a {\em fairly small} set of vertices, then we should be able to know how the other process 
should colour the same set of vertices.  In particular, we provide the following result:
\\ \vspace{-.3cm}

\noindent
{\bf Main Result:} We construct a coupling of the processes in Corollary \ref{cor:TheCoupling}.
The coupling is combinatorial, local and implies non-reconstruction 
for any $k\geq (3+\epsilon)d/\ln d$,  where $\epsilon>0$ is fixed and $d$ is sufficiently large.

\paragraph{Notation.} We use small letters of the greek alphabet for the colourings of $T$, e.g.
$\sigma, \tau$. The capital letters denote random variables which take values over the colourings 
e.g. $X,Y$.  We let $\sigma_v$ denote the colour assignment of the vertex $v$ under the colouring 
$\sigma$. Similarly, the random variable $X(v)$ is equal to the colour assignment that $X$ specifies 
for the vertex $v$. For an integer $k>0$ we let $[k]=\{1,\ldots,k\}$.

\subsection{Further Motivation - Non Reconstruction in Random Graphs \& Sampling} \label{sec:FurtherMot}

It is believed that the non-reconstruction/reconstruction transition determines the {\em dynamic phase 
transition} for the $k$-colourings of the random graph $G(n,m)$. Where $G(n,m)$ denotes the random graph
on $n$ vertices and $m$ edges with $d$ denoting the expected degree, i.e. $d=2m/n$.

The {\em dynamic phase transition} is related to the geometry of $k$-colourings of $G(n,m)$ and it 
was predicted by statistical physicists in \cite{1RSBPaper}, based on ingenious but mathematically 
non-rigorous arguments. Let us be more specific. For typical instances of $G(n,m)$, the chromatic number
$\chi$ is well known to be  $\chi\sim \frac{d}{2\ln d}$ (see \cite{GnmChromatic}). The 1-step Replica
Symmetry breaking hypothesis \cite{1RSBPaper} considers  the space of $k$-colourings
of $G(n,m)$ as $k$ varies from large to small and predicted the following phenomenon: For 
$k=(1+\epsilon)d/\ln d$ (i.e. greater than $2\chi$) all but a vanishing fraction of $k$-colourings
form a giant connected ball. That is, starting from any colouring we
can traverse the whole set of colourings in the ball by moving in steps. Each steps involves changing
only a very small -constant- number of colour assignments. However, for $k=(1-\epsilon)d/\ln d$ (e.g.
smaller than $2\chi$) the set of $k$-colouring shatters into exponentially many connected balls with 
each ball containing an exponentially small fraction of all $k$-colourings. Any two colourings in different 
balls are separated with linear hamming distance (for rigorous result about shattering see in \cite{ACO}).

\remove
{
The dynamic phase transition in $G(n,m)$ (roughly) coincides with the re\-con\-struction/non-re\-con\-struction 
transition of the colourings in a $d$-ary tree. Further investigation into this coincidence enabled the authors 
in \cite{GershMont} to developed a sufficient condition for the tree and random graph  reconstruction problem 
to coincide. In \cite{Restrepo} this condition was verified for symmetric models like colouring. 
}

It is believed that we can approximately  randomly colour $G(n,m)$ efficiently for $k$ down to the dynamic phase transition
threshold, i.e. $k=(1+\epsilon)d/\ln d$. 
Recently, the author of this paper in \cite{mysampling} suggested a new algorithm for sampling colourings of 
$G(n,m)$ with constant expected degree. Inte\-re\-sting\-ly enough the accuracy of the algorithm depends directly 
on non-reconstruction conditions. The idea there is that we first remove edges of $G(n,m)$  until it becomes 
so simple that we can take a random colouring in polynomial time. Then, we {\em rebuild} the graph by adding 
the deleted edges one by one while at the same time we {\em update the colouring}. I.e. whenever a new edge is
inserted some vertices' colouring is updated so that the colouring of the resulting graph remains random. This  algorithm 
requires at least $(2+\epsilon)d$ colours. However, since its  accuracy depends on non-reconstruction conditions 
it is reasonable to expect that we can have an improvement by using even less colours. The algorithm does not 
exploit fully its dependency on non-reconstruction due to its colouring {\em update rule}. A new, improved, update
rule is needed. Such an improvement could possibly reduce the minimum number of colours that the algorithm requires 
down to $(1+\epsilon) d/\ln d$. Very good candidates for improved updating rules are couplings as the one we present
here. 

\subsection{A basic description of the coupling.}

Consider two broadcasting processes, the first one $k$-colours $T$ as $X$ and the second
as $Y$. Assume that the root $r$ of $T$ is coloured  such that $X(r)=c$ and $Y(r)=q$ while 
$c\neq q$, for some $c,q\in [k]$.

Consider, first, the following recursive {\em naive coupling} of the two processes. Start from 
the root $r$ down to the leaves. For each coloured vertex $u\in T$ we colour its descendant $w$ 
by using {\em maximal} coupling. I.e. minimize the probability of $w$ to be disagreeing. 
If $X(u)\neq Y(u)$,  then  we have $X(w)\neq Y(w)$ only if $X(w)=Y(u)$ and $Y(w)=X(u)$.
On the other hand, if $X(u)=Y(u)$ then we always have $X(w)=Y(w)$.
It is not hard to see that $Pr[X(w)\neq Y(w)]=1/k$.

Clearly, when $k\leq d$, we expect that the naive coupling generates an ever increasing number of 
disagreeing vertices as it moves from the root down to the leaves.  As a matter of fact the
number of disagreeing vertices at each level grows as a {\em supercritical} branching process, 
i.e. the probability of having a disagreement at the leaves is strictly positive, regardless of their 
distance from the root.

Before introducing our coupling, consider the following notions. Let $N_i$ denote the 2 level subtree of
$T$ rooted at the $i$-th child of the root $r$. 
In the same setting as that in the naive coupling,  the colouring $X(N_i)$ is ``bad''  if $X(i)=q$ and
$i$ has a child $j$ such that $X(j)=c$. Similarly, $Y(N_i)$ is bad if $Y(i)=c$ and $i$ has a child $j'$ 
such that $Y(j')=q$.

In the naive coupling, $X(N_i)$ is bad if and only if $Y(N_i)$  is bad. For such a pair the identity 
coupling is precluded and the creation of disagreements is inevitable. That is, the naive coupling handles
the appearance of bad lists by coupling them together. Clearly this is not desirable. Especially, for $k\leq d$ 
the number of bad colourings $X(N_i)$, $Y(N_i)$  are  {\em ``too many''}.  This causes the ever
increasing number of disagreements of the naive coupling.
%

%
The coupling we propose here 
uses the following, not so obvious, observation to handle the bad lists:
Consider $X(N_j)$ conditional that (A) it is a bad and (B) there
is at least one colour that is not used by $X(N_j)$. Then, it is highly likely that there is a child 
of $r$, e.g. the vertex $s$, where $Y(N_s)$ satisfies the following two conditions: (A') The colour 
$Y(s)$ is not assigned to any child of $j$ under the colouring $X$ and (B') the colour 
$c$ is assigned to at least one child of $s$ under the colouring $Y$.
For such $X(N_j)$ and $Y(N_s)$, we can show the following:  The colour assignment of the children of $j$ in the first
process is identically distributed to that of the children of $s$ in 
the second process. 

Based on the above observation, the target now is to couple the colourings $X(N_i)$s and $Y(N_i)$s 
such that if $X(N_i)$ satisfies the conditions (A) and (B), then $Y(N_i)$ 
satisfies (A') and (B') and vice versa\footnote{I.e. the analogous conditions should hold for
bad $Y(N_j)$.}. Then, clearly,  we can couple  the colouring of children of the vertex $i$ 
identically. Let us remark that it is not completely trivial to ``aline'' these two 
different kinds of colouring in the coupling.

Working as described in the previous paragraph, the number of disagreements drops dramatically, 
compared to the naive coupling. As a matter of fact the number of disagreeing vertices grows as
a {\em subcritical} branching process, i.e. the probability of having disagreement at the leaves 
drops exponentially with their distance from the root.

\begin{remark}\em
The update rule in the sampling algorithm in \cite{mysampling}, somehow,  is based on 
what we call here {naive coupling}. 
\end{remark}

\section{Coupling}\label{sec:StronCoupling}

In this section we present the coupling in full detail. 
We let $\mu(\cdot)$ denote the uniform distribution over the $k$-colourings of $T$.
We consider two  broadcasting processes such that the first one assigns 
colour $c$ to the root while the second one assigns colour $q$. To avoid trivialities assume that $c\neq q$.
Finally, we let $X$, $Y$  be the colourings that the two processes assign to $T$, respectively. 
We proceed by introducing some useful concepts.

\subsection{Preliminaries}

Let $c_1, c_2,c_3\in [k]$ and, for $j=1,2,3$, let $L_j$  be a $d$-dimensional list
which contains colours in $[k]\backslash\{c_j\}$. For these three lists we have the following:

\begin{description}

\item [bad:] The pair $(L_{1}, L_{2})$  is called {\em bad}  if and only if $c_1\neq c_2$  and $c_2\in L_{1}$
while $c_1\in L_{2}$. 

\item [rescuable:] A bad pair $(L_{1},L_{2})$ is called {\em rescuable} if there is at least one 
colour in $[k]\backslash\{c_1,c_2\}$ that does not appear in both $L_{1}$ and $L_{2}$. 

\item [special:] Given that the pair $(L_{1},L_{2})$ is rescuable, the list $L_{3}$
is called ``special w.r.t. $L_{1}$'' if the  following holds:
\begin{enumerate}
\item $c_3$ does not belong to any of $L_{1},L_{2},L_{3}$ while $c_3\neq c_1$.
\item One of the following two holds:
\begin{enumerate}
\item $L_{3}$  contains $c_1$ but it does not  contain $c_2$
\item $L_{3}$  contains $c_2$ but it does not  contain $c_1$.
\end{enumerate}
\end{enumerate}

\item [good:] Given that the pair $(L_{1},L_{2})$ is rescuable, a list $L_{3}$  is called
``good w.r.t. $L_{1}$'' if it is  special  w.r.t. $L_{1}$ and the condition 2(b) holds.
\end{description}

\begin{definition}\label{def:RandomList}
For $c\in [k]$, we let ${\lambda}_c$ denote the uniform distribution over the $d$
dimensional lists of colours which do not contain the color $c$.
\end{definition}

\begin{lemma}\label{lemma:RescuabelVsGood}
Let $c,q,s\in [k]$ such that they are different with each other. Let $L_1$, $L_2$ and $L_3$ be 
distributed as in $\lambda_c$, $\lambda_q$, and $\lambda_s$, respectively. Conditional that the 
pair  $(L_1, L_2)$ is rescuable and $L_3$ is  good w.r.t $L_1$, then $L_1$ and $L_3$ are 
identically distributed. 
\end{lemma}

\noindent
For the proof of Lemma \ref{lemma:RescuabelVsGood} see in Section \ref{sec:lemma:RescuabelVsGood}.

\subsection{The coupling}

The coupling works inductively. At each step it considers two consecutive levels of $T$. 
Here, we describe how does it work for the two levels below the root. The coupling for the
rest of the tree will be immediate.

We need to use the some auxiliary random variables defined w.r.t. $X, Y$. We let $L_{X}, L_Y \in [k]^d$ 
be ordered lists which contain the colours that are assigned to the children of the root $r$  under 
the colour assignments $X$ and $Y$, respectively
\footnote{There is a bijection between the elements $L_X$ and the colour assignments of $X$ at the
children of the root $r$. The same holds for $Y$ and $L_Y$.}.   
Additionally, for every $i\in [d]$ we let $L^i_{X}$ (and $L^i_Y$) be the corresponding lists of the 
colour assignments of the children of the vertex that is going to be assigned the colour $L_{X}(i)$
(and $L_Y(i)$), e.g.  see  Figure \ref{fig:Lists-LY}.

\begin{figure}
\begin{minipage}{0.5\textwidth}
	\centering
		\includegraphics[height=2.1cm]{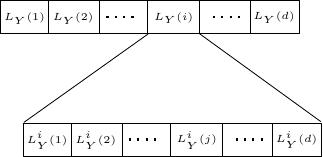}
		\caption{The lists $L_Y$ and $L^i_Y$.}
	\label{fig:Lists-LY}
\end{minipage}
\begin{minipage}{0.5\textwidth}
	\centering
		\includegraphics[height=2.5cm]{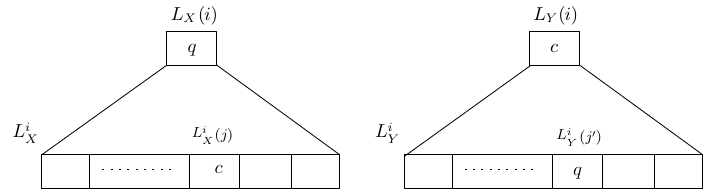}
		\caption{A ``Bad'' pair of lists.}
	\label{fig:Lists-BadLY}
\end{minipage}
\end{figure}

Essentially, the list $L_X$ specifies the colours that are assigned to the children of 
the root by $X$ but without providing exactly the information about which vertex takes 
which colour. The same holds for the other lists $L^i_X$, $L_Y$ and $L^i_Y$,
for every $i\in [d]$. We couple the colour assignments of $X,Y$ on the vertices
at levels 1 and 2 of $T$ by using these lists.  I.e. we couple the entries 
of the lists, first, and then we obtain the assignments of $X,Y$. There we need to 
use the following

\begin{remark}\label{remark:ListVsAssignments}
Given $L_X$, the colour assignments of $X$ to  the children of $r$ can be obtained as follows: 
Take $\pi$, a random permutation of the elements in $\{1, \ldots, d\}$. Then, for the $i$-th
child of $r$ set $X(i)=L_X (\pi(i))$. Given $L^i_X$s we obtain the colourings of the grandchildren
of $r$ in an analogous way.
\end{remark}

\noindent
We use the notions of the ``bad'' or the ``rescuable'' pair for every  $(L^i_X, L^i_Y)$ 
such that $L_X(i)=q$ and $L_Y(i)=c$, with $i\in [d]$. 
That is, we consider a pair  $(L^i_X,L^i_Y)$ to be bad (or not) only if $L_X(i)=q$ and $L_Y(i)=c$. 
For $(L^i_X,L^i_Y)$ ``badness'' is determined w.r.t  the colours $c,q$. E.g. see Figure 
\ref{fig:Lists-BadLY}. There $L^i_X$ does not contain $q$ due to the fact that $L_X(i)=q$ but it 
contains $c$, i.e. $L^i_X(j)=c$. On the other hand, $L^i_Y$ does not contain $c$ due to the fact that $L_Y(i)=c$ but 
it contains  $q$, i.e. $L^i_Y(j')=q$.

The coupling works in three phases. In the first two it focuses on the list of
colours. It considers $X$ and $Y$ only in the last phase.

In {\bf  Phase 1} only a certain part of information about $L_X$, $L_Y$, $L^j_X$ and
$L^j_Y$, for $j\in [d]$, is revealed. That is, we reveal  ``bad'' and ``rescuable'' pairs as well as which 
lists are ``special''.  Observe that the lists $L_X$, $L_Y$ are distributed as in $\lambda_c$ 
and $\lambda_q$, respectively.  Also, given that $L_X(i)=c'$ (or $L_Y(i)=c'$) for some $c'\in [k]$,
then $L^i_X$ (or $L^i_Y$) is distributed as in $\lambda_{c'}$. Phase 1 is as follows.
\\  \vspace{-.1cm}

\noindent
{\bf Phase 1:} \\ \vspace{-.8cm} \\
\rule{\textwidth}{1pt} \\ 
\vspace{-.7cm}
\begin{enumerate} 
\item 
Reveal only for which $i\in [d]$ we have $L_X(i), L_Y(i)\in \{c,q\}$. Couple the choices of $L_X$ and $L_Y$ 
such that $L_X(i)=q$ if and only if $L_Y(i)=c$.

\item For each $i$ such that $L_X(i)=q$ and $L_Y(i)=c$  reveal whether $(L^i_X, L^i_Y)$ 
is ``bad'' or not \footnote{Here, we only ask if $L^i_X$ and $L^i_Y$ contain $c$ and $q$, respectively.
It is trivial that $Pr[c\in L^i_X]=Pr[q\in L^i_Y]$. }. 

\item If $(L^i_X,L^i_y)$ is bad reveal whether it is ``rescuable''. The coupling is so that 
the colours in $[k]\backslash \{c,q\}$ are chosen independently from the two lists.

\item If the number of rescuable pairs is $l$, partition the set of non-bad pairs 
$(L^j_X, L^j_Y)$  into $l$ parts which are as equal sized as possible.  Each rescuable pair 
is {\em associated} to exactly one part in the partition.

\item For each non-bad pair $(L^j_X, L^j_Y)$ that is associated to 
the rescuable pair $(L^i_X, L^i_Y)$ do the following: 
Reveal if $(L^j_X, L^j_Y)$  consists of special lists, i.e. $L^j_X$ and $L^j_Y$ are 
special w.r.t $L^i_Y$ and $L^i_X$, respectively. We use coupling such that either both lists 
in the pair are special or both are not.
\end{enumerate}
\vspace{-.5cm}
\rule{\textwidth}{1pt} \\ \vspace{-.1cm}

\noindent
We should recognize the bad pairs as the potential sources of disagreements in the coupling.
Our attempt is to eliminate the disagreements caused by the rescuable pairs only \footnote{For the values of $k$ we
consider it is highly unlikely that a bad pair is non-rescuable.}. This eliminations
uses Lemma \ref{lemma:RescuabelVsGood} as follows:
Consider the rescuable pair $(L^i_X,L^i_Y)$. We let $A_i$ be the set of indices 
such that if $j\in A_i$ then the pair $(L^j_X,L^j_Y)$ is associated to the bad pair  
$(L^i_X,L^i_Y)$ in step 4. Assume that the pair $(L^j_X,L^j_Y)$ is ``$i$-good'', i.e. 
$L^j_X$ is good w.r.t $L^i_Y$  and $L^j_Y$ is good w.r.t. $L^i_X$. Then, Lemma 
\ref{lemma:RescuabelVsGood} implies that $L^j_X$ and $L^i_Y$ are identically 
distributed. The same holds for $L^j_Y$ and $L^i_X$.
In this case,  when we reveal all the information of the lists (which will be done 
in a subsequent phase)  we couple $L^j_X$ with  $L^i_Y$  and $L^j_Y$ and $L^i_X$. 
Clearly this eliminates all the potential disagreements generated by the rescuable pair 
$(L^i_X,L^i_Y)$. 

\begin{remark}
For technical reasons which will become apparent soon, we do not reveal which pairs in $A_i$
are $i$-good. We only reveal if the pair $(L^j_X, L^j_Y)$, for $j\in A_i$, is $i$-special, i.e. 
$L^j_X$ and $L^j_Y$ are special w.r.t $L^i_Y$ and $L^i_X$, respectively.
\end{remark}

\noindent
In {\bf Phase 2}, we construct a mapping $f:[d]\to[d]$  with the following property: 
If $f(i)=j$, then when we reveal the full information about the lists we couple maximally 
$L_X(i)$ with $L_Y(j)$ and $L^i_X$ with $L^j_Y$. The mapping  $f$ is constructed so as to 
minimize the number of disagreements between the lists $L^i_X$ and $L^j_Y$.
In particular we have the following situation in mind. It is desirable that for each rescuable 
pair $(L^i_X,L^i_Y)$ to find an $i$-good pair among the $i$-special pairs in $A_i$. Once we have 
such a pair, e.g. $(L^j_X,L^j_Y)$ for some $j\in A_i$, we set $f(i)=j$ and $f(j)=i$. 

The next step of the coupling reveals which $i$-special pairs $(L^j_X,L^j_Y)$ with $j\in A_i$ 
are also $i$-good.
In order to reveal whether an $i$-special pair $(L^j_X,L^j_Y)$, for $j\in A_i$, is  $i$-good 
we should couple the lists such that $L^j_X \neq L^j_Y$. This pair is $i$-good with probability 
$1/2$. With the remaining  probability it is not and  the lists in the pair  
$(L^j_X,L^j_Y)$ cannot be coupled identically.

\begin{remark}
For the $i$-special pair $(L^j_X,L^j_Y)$, we reveal whether ``$c\in L^j_X$ and $q \notin L^j_X$'' 
or ``$c\notin L^j_X$ and $q\in L^j_X$''. E.g. assume that we have ``$c\in L^j_X$ and $q\notin L^j_X$'',
then the coupling should decide the opposite for $L^j_Y$, i.e. ``$c\notin L^j_Y$ and $q\in L^j_Y$''. 
Revealing the lists in such a way it always holds $L^j_X\neq L^j_Y$
\end{remark}

\noindent
Of course  there is always the option of coupling an $i$-special pair identically. 
But then it is impossible to generate an $i$-good pair. The $i$-special pair $(L^j_X,L^j_Y)$ 
which is coupled so as to generate an $i$-good pair but it failed to do so is called  
{\bf $i$-fail} (see example in Figure \ref{fig:i-matching}, the upper pair is $i$-fail). 
It is straightforward, now, that as we search for an $i$-good pair it is possible that we 
generate extra potential sources of  disagreements.  To this end we use the following lemma.

\begin{lemma}\label{lemma:GoodVsFail}
Assume that the $i$-special pairs $(L^t_X,L^t_Y)$ and $(L^s_X,L^s_Y)$ with $s,t\in A_i$ are revealed 
and $(L^t_X,L^t_Y)$ is $i$-good while $(L^s_X, L^s_Y)$  is $i$-fail. Then,  $L^t_X$ is identically 
distributed to $L^s_Y$ and $L^t_Y$ is identically distributed to $L^s_X$.
\end{lemma}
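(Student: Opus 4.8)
The plan is to compute, for each of the four lists involved, its conditional law given all the information revealed in Phase~1 together with the subsequent $i$-good/$i$-fail revelation, and then to observe that the $i$-good list on one side has exactly the same law as the $i$-fail list on the other side. Everything reduces to the colour-swap symmetry interchanging the roles of $c$ and $q$.

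First I would record the structure of the rescuable pair $(L^i_X,L^i_Y)$. Since $L^i_X$ is bad and centred at $q$, its set of colours is $\{c\}\cup S$, where $S\subseteq[k]\setminus\{c,q\}$ is the set of non-$\{c,q\}$ colours it uses; likewise $L^i_Y$ is bad and centred at $c$, so its colour set has the form $\{q\}\cup S'$. Step~3 of Phase~1 couples the two lists so that the unused colours of $[k]\setminus\{c,q\}$ coincide, forcing $S=S'$. Writing $U=[k]\setminus(\{c,q\}\cup S)$ (which is nonempty precisely because the lists are rescuable), condition~1 in the definition of ``$i$-special'' says exactly that the centre $L_X(t)$ of an $i$-special $L^t_X$ must avoid $q$ and avoid $L^i_Y=\{q\}\cup S$, hence lies in $U$; symmetrically the centre $L_Y(s)$ of an $i$-special $L^s_Y$ must avoid $c$ and avoid $L^i_X=\{c\}\cup S$, so it lies in the \emph{same} set $U$.

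Next I would pin down the two ingredients of each list's law, the centre and the body. The centre $L_X(t)$ is a priori uniform on $[k]\setminus\{c\}$ and $L_Y(s)$ a priori uniform on $[k]\setminus\{q\}$; conditioning each on membership in the common set $U$ makes both uniform on $U$. For the body: given a centre $a\in U$ (so $a\neq c,q$), the list is $d$ i.i.d.\ uniform samples from $[k]\setminus\{a\}$ conditioned on the revealed $\{c,q\}$-containment pattern. Here I would use that for any $a\in U$ the probability of the pattern ``contains $q$, omits $c$'' equals $\left(\tfrac{k-2}{k-1}\right)^d-\left(\tfrac{k-3}{k-1}\right)^d$, which is independent of $a$ and equal to the probability of the opposite pattern; this is exactly why the good/fail split is $1/2$--$1/2$, and it also guarantees that conditioning on the pattern does not disturb the uniformity of the centre on $U$.

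Finally I would match the patterns. An $i$-good $L^t_X$ carries ``$q$ in, $c$ out'', while an $i$-fail $L^s_Y$ carries the opposite decision forced on the $Y$-side, which is again ``$q$ in, $c$ out''; since both have centre uniform on $U$ and the same conditional body, $L^t_X$ and $L^s_Y$ are identically distributed. Symmetrically, an $i$-good $L^t_Y$ and an $i$-fail $L^s_X$ both carry ``$c$ in, $q$ out'' over a centre uniform on $U$, giving the second identity. The main obstacle is the bookkeeping of the conditioning: I must check that none of the earlier revelations (the partition into the $A_i$, the bad and rescuable tests, and the Step~3 coupling) introduces any dependence on the body of $L^t_X$ or $L^s_Y$ beyond fixing $S$ (hence $U$) and the containment pattern, so that the body genuinely remains an i.i.d.\ sample conditioned only on that pattern. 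Granting this, the colour-swap symmetry closes the argument.
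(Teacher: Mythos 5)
Your proposal is correct and takes essentially the same route as the paper's proof: both arguments identify the conditional law of each of the four lists --- centre uniform on the common unused colour set (which is common precisely because Step~3 of Phase~1 couples the rescuable pair so that the unused colours of $[k]\setminus\{c,q\}$ coincide), body conditioned on the revealed $\{c,q\}$-containment pattern --- and then match the $i$-good pattern on one side with the $i$-fail pattern on the other via the $c\leftrightarrow q$ symmetry. Your explicit pattern-probability computation and the closing bookkeeping caveat are refinements that the paper leaves implicit.
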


\noindent
For a proof of Lemma \ref{lemma:GoodVsFail} see in Section \ref{sec:lemma:GoodVsFail}.
Figure \ref{fig:i-matching} gives a schematic representation of what is stated 
in Lemma \ref{lemma:GoodVsFail}. The arrows show the pairs of lists that are identically distributed.  

Lemma \ref{lemma:GoodVsFail} suggests that $i$-good pairs can be used to eliminate
the potential disagreements generated by $i$-fails. Thus, in the case we generate $i$-fails  
we (try) to reveal some extra $i$-good pairs. In particular, we work as follows: 
\\ \vspace{-.3cm}

\noindent
{\bf Phase 2.} 
\\ \vspace{-.8cm} \\
\rule{\textwidth}{1pt} \\ 
\vspace{-.4cm}

\noindent
For each rescuable pair $(L^{i}_X,L^i_Y)$ do the following:
\vspace{-.1cm}
\begin{enumerate}
\item Reveal, sequentially, whether each $i$-special pair in $A_i$  is $i$-good or $i$-fail until 
either of the following
two happens:
\begin{itemize}
\item the number of $i$-good pairs exceeds the number of $i$-fails by one, \vspace{-.1cm}
\item there are no other $i$-special pairs in $A_i$ to reveal.
\end{itemize}
\vspace{-.2cm}
\item The remaining unrevealed $i$-special pairs, if any, are coupled by using
identity coupling. 

\item If there is an $i$-good pair $(L^j_X,L^j_Y)$  ``match'' it with the rescuable pair 
$(L^i_X, L^i_Y)$, i.e. set $f(i)=j$ and $f(j)=i$.

\item Each of the remaining $i$-good pairs $(L^j_X,L^j_Y)$ is
matched to one $i$-fail pair $(L^s_X,L^s_Y)$, i.e. set $f(j)=s$ and $f(s)=j$.
No $i$-fail is matched to more than one  $i$-good pairs and vice versa
\footnote{This implies that the mapping $f$ is a bijection.}.

\item For each $j\in A_i$ such that $(L^j_X,L^j_Y)$ is not matched
to some other pair, match it to itself, i.e. set $f(j)=j$.\vspace{-.1cm}
\end{enumerate}
\noindent
\rule{\textwidth}{1pt} \\

\begin{figure}
\begin{minipage}{\textwidth}
	\centering
		\includegraphics[height=5cm]{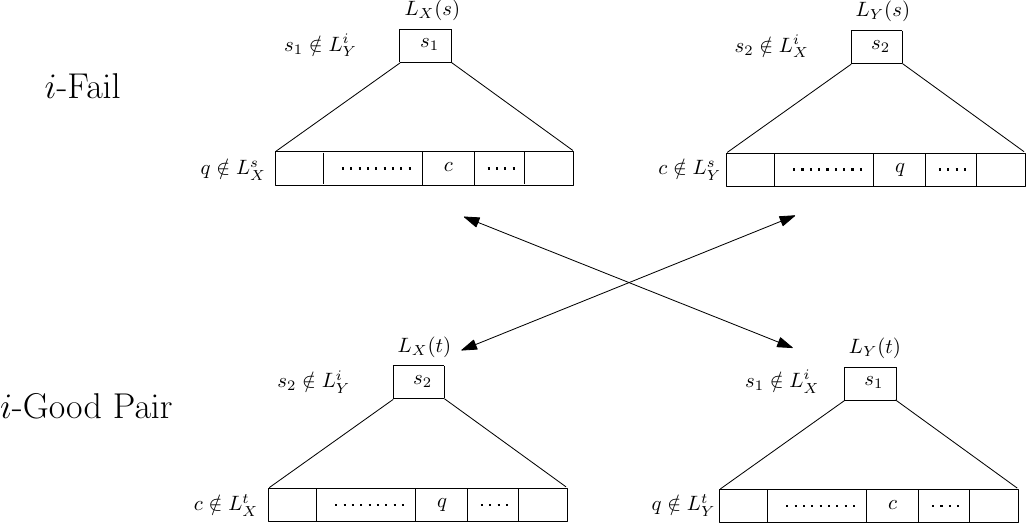}
		\caption{Matching between $i$-fail and $i$-good pairs. The coupling between $L_X(s)$-$L_Y(t)$ and
		$L_X(t)$-$L_Y(s)$ is done after the good/fail revelations.}
	\label{fig:i-matching}
\end{minipage}
\end{figure}

\noindent
Ideally, Phase 2 generates a number of $i$-good pairs which exceed the number of $i$-fails
by one. If this is the case, $f$ specifies pairs whose coupling generates no disagreement.
That is, the rescuable pair $(L^i_X,L^i_Y)$ and the $i$-fails are going to be coupled with 
an $i$-good pair each. Then, due to Lemma \ref{lemma:RescuabelVsGood} and Lemma \ref{lemma:GoodVsFail}
no disagreement is going to be generated.
Of course, it is possible that the number of the $i$-good pairs is not sufficiently large.
Then, we end up with some $i$-fails which  cannot be matched  with any $i$-good pair (possibly 
with the rescuable pair $(L^i_X,L^i_Y)$ as well). These pairs are matched to themselves and
some disagreements are going to appear in the full revelation. However, we show that the
expected number of disagreements vanishes as long as $k\geq (3+\epsilon)d/\ln d$.

We now, proceed with {\bf Phase 3}. There we  reveal the  full information about the lists 
by coupling the pairs  as specified  by $f$. Given the full information for the lists we reveal the 
assignments of $X$, $Y$ for the (grand)children of $r$. 
Note that if $f(i)=j$, then the child of $r$ that gets $L_X(i)$ under
$X$ will get $L_Y(j)$ under $Y$.  Additionally, the grand child of $r$
that is assigned the colour $L^i_X(t)$ under $X$ is going to take the
colour $L^j_Y(t)$ under $Y$. 
\\ \vspace{-.3cm}

\noindent
{\bf Phase 3:} \hspace*{1cm}  \\ \vspace{-.7cm} \\
\rule{\textwidth}{1pt} \\ 
\vspace{-.7cm}

\begin{enumerate}
\item  For every $s,t$ such that $f(s)=t$, couple optimally $L_X(s)$
with $L_Y(t)$ as well as $L^s_X$ with $L^t_Y$.

\item Reveal which element of the list $L_X$ is assigned to which
child of $r$ and which element of  $L^j_X$  goes to which grandchild of
$r$, as Remark \ref{remark:ListVsAssignments} specifies.

\item  Assuming that $v$, child of $r$, is such that $X(v)$ is set
$L_X(s)$, then we set $Y(v)$ equal to $L_Y(t)$, where $t=f(s)$.
Also, for $u$, child of $v$, such that $X(u)$ set $L^s_X(j)$ 
we set $Y(u)$  equal to $L^t_Y(j)$.
\end{enumerate}
\vspace{-.5cm}
\rule{\textwidth}{1pt} \\

\noindent
{
Applying the coupling inductively, i.e. for the grandchildren of the root and so on, 
at the end we get the full colourings $X$ and $Y$.
%
}
A very basic result is the following theorem.

\begin{theorem}\label{thrm:ValidOptimal}
For $c,q\in [k]$, assume that in the above coupling it holds
$X(r)=c$ and $Y(r)=q$, where $r$ is the root vertex of $T$. 
Then at the end of the coupling, $X$ and $Y$ are distributed
as in $\mu(\cdot|X(r)=c)$ and $\mu(\cdot|Y(r)=q)$, respectively.
\end{theorem}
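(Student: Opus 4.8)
The plan is to prove the two marginal statements separately: since a coupling is valid precisely when each of its two marginals is correct, it suffices to show that the $X$-side of the procedure produces a sample from $\mu(\cdot\mid X(r)=c)$ and, symmetrically, that the $Y$-side produces a sample from $\mu(\cdot\mid Y(r)=q)$. I would argue the $X$-statement; the $Y$-statement is identical after swapping the roles of $c$ and $q$. First I would invoke Remark \ref{remark:ListVsAssignments} to reduce the claim about vertex colourings to a claim about lists: under $\mu(\cdot\mid X(r)=c)$ the colouring of the two top levels is generated by drawing $L_X$ with i.i.d.\ entries uniform on $[k]\setminus\{c\}$, then drawing each $L^i_X$ with i.i.d.\ entries uniform on $[k]\setminus\{L_X(i)\}$, and finally assigning list entries to the actual (grand)children through independent uniform permutations. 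Since the coupling applies the same procedure recursively at every grandchild, it is enough to check that the two top levels are generated correctly; the deeper levels then follow by induction on the height.

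The engine of the proof is the following standard principle: any finite sequence of operations in which each step either (a) reveals a deterministic function of the underlying random lists, or (b) couples a block of random variables conditional on everything already revealed, using a coupling whose $X$-marginal is the true conditional law, yields a joint distribution whose $X$-marginal is exactly the target law. I would therefore walk through Phases 1--3 and certify that every step is of type (a) or (b). The one point that needs care at the very start is Phase~1, step~1: under the $X$-process the entry $L_X(i)$ is uniform on $[k]\setminus\{c\}$, so $L_X(i)=c$ is impossible and the revealed event ``$L_X(i)\in\{c,q\}$'' coincides with ``$L_X(i)=q$'', an event of probability $1/(k-1)$; symmetrically ``$L_Y(i)\in\{c,q\}$'' coincides with ``$L_Y(i)=c$'', also of probability $1/(k-1)$. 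Hence the prescribed coupling $L_X(i)=q\Leftrightarrow L_Y(i)=c$ is a legitimate coupling of two equiprobable events, and conditional on the complementary event the colour is uniform on $[k]\setminus\{c,q\}$ on each side and may be coupled by identity. Every subsequent revelation in Phase~1 (bad, rescuable, the common set of unused colours, $i$-special) is a function of the lists, so is of type (a), and the couplings used are either optimal or identity couplings, so are of type (b).

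The identity couplings are the only places where type (b) requires an external input, and this is exactly what the two lemmas supply. When Phase~3 uses $f$ to couple a rescuable $L^i_X$ with a $j$-good $L^j_Y$, or, in the situation of Lemma \ref{lemma:GoodVsFail}, an $i$-good list with an $i$-fail list, validity of the identity coupling needs the two lists to be identically distributed given the revealed information, and this is precisely the content of Lemma \ref{lemma:RescuabelVsGood} and Lemma \ref{lemma:GoodVsFail}. The reindexing performed by $f$ is harmless for the marginals: replacing the partner of a given $L^i_X$ by $L^{f(i)}_Y$ changes only the joint law, while the conditional law of each $L^i_X$ is untouched, and the optimal couplings of Phase~3 preserve the $X$-marginal by definition. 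Because the entries of $L_X$ are exchangeable (indeed i.i.d.), the partition into the sets $A_i$ and the matching of indices can only permute equally-distributed coordinates, and the independent permutations of Phase~3, step~2 restore the correct vertex-level joint law, completing the reduction back from lists to colourings.

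The step I expect to be the main obstacle is the adaptive revelation in Phase~2, together with the partition into the $A_i$. Here one reveals $i$-special pairs one at a time as $i$-good or $i$-fail under a data-dependent stopping rule (stop once the good pairs exceed the fail pairs by one), and one must verify that this never leaks ``future'' randomness of a single side. Concretely, I would check that revealing an $i$-special pair as good/fail is implemented by a coupling under which, conditioned on all prior revelations, each side is $i$-good with probability exactly $1/2$ (as noted in the text), so the reveal is marginal-consistent; and that both the stopping rule and the construction of the $A_i$ depend only on already-revealed, side-symmetric quantities, so that conditioning on them leaves every still-unrevealed list with its correct conditional law. Once it is confirmed that every branch of the procedure is a genuine sequence of conditional couplings of types (a) and (b), the marginal-preservation principle gives $X\sim\mu(\cdot\mid X(r)=c)$ and $Y\sim\mu(\cdot\mid Y(r)=q)$ at the top two levels, and induction on the height of $T$ yields the theorem.
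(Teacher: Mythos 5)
Your proposal is correct and takes essentially the same approach as the paper: the paper's entire proof is the one-sentence observation that every revelation is made from the appropriate distribution conditional on the information already revealed, which is precisely the marginal-preservation principle you formalise and then verify phase by phase, with Lemmas \ref{lemma:RescuabelVsGood} and \ref{lemma:GoodVsFail} licensing the identity couplings exactly as the paper intends. Your checks of the adaptive stopping rule in Phase 2, the reindexing by $f$, and the permutation step of Remark \ref{remark:ListVsAssignments} are a more careful elaboration of the same argument, not a different one.
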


\begin{proof}
Theorem follows by noting that for every list, conditional on the
information that is already known to us, we reveal some information
by using the appropriate distribution.
\end{proof}

\noindent
Furthermore, from the description of the coupling the following
corollary is direct.

\begin{corollary}\label{cor:SourcesOfDis}
The disagreements in the coupling have three different sources:
\begin{enumerate}
\item Pairs of bad lists which are not rescuable.
\item Pairs of rescuable lists for which it was impossible to find a good pair.
\item Pairs of $i$-fail lists, for some $i$, which are not matched
to an $i$-good pair.
\end{enumerate}
\end{corollary}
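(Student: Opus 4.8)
The plan is to read the three sources straight off the coupling by tracking where an optimal coupling in Phase 3 can fail to be an identity coupling. The only place a colour is actually assigned is Phase 3, step 1, where for every matched pair $f(s)=t$ we couple $L^s_X$ with $L^t_Y$ optimally; since an optimal coupling of two identically distributed lists can be taken to be the diagonal (zero-disagreement) coupling, a disagreement among the grandchildren of $r$ is produced with positive probability \emph{exactly} when the coupled lists $L^s_X$ and $L^t_Y$ are not identically distributed. Thus the whole statement reduces to classifying, for every $s\in[d]$, the partner $t=f(s)$ and asking whether $L^s_X$ and $L^t_Y$ share a distribution.

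I would first collect the indices that are coupled to an identically distributed partner, so that they generate no disagreement. A non-$\{c,q\}$ position is matched to itself, and there $L_X(i)$ and $L_Y(i)$ are both uniform on $[k]\setminus\{c,q\}$, so $L^i_X$ and $L^i_Y$ are identically distributed. A $\{c,q\}$ position whose pair is not bad is likewise matched to itself, and non-badness forces $c\notin L^i_X$ and $q\notin L^i_Y$, so each reduces to an i.i.d.\ list over $[k]\setminus\{c,q\}$ and again $L^i_X$ and $L^i_Y$ agree in distribution. The genuinely matched pairs built in Phase 2 are handled by the two lemmas: a rescuable pair joined to a spare $i$-good pair satisfies the hypothesis of Lemma \ref{lemma:RescuabelVsGood}, and an $i$-good pair joined to an $i$-fail pair satisfies that of Lemma \ref{lemma:GoodVsFail}; in both cases the lemma hands us the identical-distribution statement that turns the optimal coupling into an identity coupling. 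This disposes of steps 2--5 of Phase 2.

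What survives are precisely the indices that Phase 2 is forced to match to themselves in step 6, and these are the three listed sources. A bad list that is not rescuable enters no rescue and is coupled as $(L^i_X,L^i_Y)$; here $c\in L^i_X$ while $q\in L^i_Y$, so the lists differ in distribution (source 1). A rescuable pair for which step 5 finds no leftover $i$-good partner stays matched to itself, again coupling the two bad lists (source 2). An $i$-fail pair left over because the $i$-good pairs failed to outnumber the $i$-fails is matched to itself, coupling a list with $c\in,\,q\notin$ against a list with $q\in,\,c\notin$, which are not identically distributed (source 3). Since every index falls into either the identity-coupled bucket or one of these three, the corollary follows.

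The step I expect to take the most care is the exhaustiveness of this classification rather than any estimate: I must confirm that the construction of $f$ in steps 3--6 of Phase 2 assigns \emph{every} element of $[d]$ to exactly one of the buckets above, so that no fourth, overlooked non-identity coupling can appear, and that the hypotheses of Lemmas \ref{lemma:RescuabelVsGood} and \ref{lemma:GoodVsFail} are genuinely met at each matched pair (including the head colours $L_X(s),L_Y(t)$, not only the subtree lists). Once this bookkeeping is verified the corollary is immediate, with no further computation.
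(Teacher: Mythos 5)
Your proposal is correct, and it follows exactly the route the paper intends: the paper states this corollary without proof as ``direct'' from the description of the coupling, and your case analysis --- classifying every index of $[d]$ by its partner under $f$, invoking Lemma \ref{lemma:RescuabelVsGood} for rescuable--good matches and Lemma \ref{lemma:GoodVsFail} for good--fail matches, and observing that only the three kinds of self-matched, non-identically-distributed pairs survive --- is precisely the bookkeeping the paper leaves implicit. Your closing caution about exhaustiveness and about the head colours $L_X(s),L_Y(t)$ is well placed, and both points are indeed settled by Phase 2's matching rules and by Phase 1, step 3's requirement that the rescuable pair's unused colours in $[k]\backslash\{c,q\}$ coincide.
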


\begin{proposition}\label{prop:Expect-Dis}
Consider the above coupling between $X$ and $Y$ and let ${\cal W}_l$
be the number of  vertices $u$ at level $l$ such that $X(u)\neq
Y(u)$. For fixed $\epsilon>0$, sufficiently large $d$, 
$k=(1+\epsilon)\frac{d}{\ln d}$ and every even integer $l>0$ it holds that 
\begin{displaymath}
E[{\cal W}_l]\leq \left(d^{-0.1\frac{\epsilon-2}{\epsilon+1}}\right)^{l/2}.
\end{displaymath}
\end{proposition}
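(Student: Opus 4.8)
My plan is to exhibit $\{\mathcal{W}_l\}$ as (a bound on) a subcritical branching process and read off the claimed geometric decay. The starting observation is that the coupling of Phases 1--3 is \emph{local and recursive}: the colouring of the two levels below any vertex $w$ with $X(w)\neq Y(w)$ is produced by running exactly the same procedure, with the ordered pair $(X(w),Y(w))$ playing the role that $(c,q)$ plays at the root. By the colour symmetry of the model this subproblem is distributed identically to the root problem up to relabelling colours, so if $\rho$ denotes the expected number of vertices two levels below a disagreeing root that themselves disagree, then conditioning on the configuration down to an even level $l$ gives $E[\mathcal{W}_{l+2}\mid \mathcal{W}_l]\le \rho\,\mathcal{W}_l$. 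Iterating and using $E[\mathcal{W}_0]=1$ yields $E[\mathcal{W}_l]\le \rho^{l/2}$ for even $l$, so the whole statement reduces to the one-step bound $\rho\le d^{-0.1\frac{\epsilon-1}{\epsilon+1}}$. The restriction to even $l$ is exactly because the natural unit of the coupling is a block of two consecutive levels.

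Next I would compute $\rho$ by splitting the disagreeing grandchildren according to the three sources isolated in Corollary \ref{cor:SourcesOfDis}, and control each source using elementary occupancy estimates in the regime $k=(1+\epsilon)\frac d{\ln d}$. The single estimate driving everything is that a list of $d$ i.i.d.\ colours drawn from $k-1$ colours omits any fixed colour with probability $(1-\tfrac1{k-1})^d\approx d^{-1/(1+\epsilon)}$, since $d/(k-1)\approx \tfrac{\ln d}{1+\epsilon}$. From this: the expected number of children $i$ with $L_X(i)=q$ (the only candidates for bad lists) is $\approx \tfrac{\ln d}{1+\epsilon}$; a bad list fails to be rescuable only if it hits \emph{all} $k-1$ available colours, which lies below the coupon-collector threshold $(k-1)\ln(k-1)\approx(1+\epsilon)d>d$ and therefore has super-polynomially small probability, making Source 1 negligible; and a given non-bad pair is $i$-special with probability $\approx d^{-2/(1+\epsilon)}$ (the product of ``$L_X$ avoids the rescuable list's missing colours'', $\approx d^{-1/(1+\epsilon)}$, and ``contains exactly one of $c,q$'', $\approx d^{-1/(1+\epsilon)}$).

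The heart of the argument is Sources 2 and 3, which I would handle through the matching built in Phase 2. For a fixed rescuable pair the associated block $A_i$ has size $\approx k$, so the expected number of $i$-special pairs it contains is $N\approx k\cdot d^{-2/(1+\epsilon)}\approx \tfrac1{\ln d}\,d^{(\epsilon-1)/(\epsilon+1)}$, which tends to infinity precisely when $\epsilon>1$. Since each revealed $i$-special pair is $i$-good or $i$-fail with probability $\tfrac12$ each (Lemma \ref{lemma:GoodVsFail} together with the flip in Phase 2), the surplus $\#\text{good}-\#\text{fail}$ performs a symmetric walk, and I would use a gambler's-ruin/reflection estimate to bound the probability $\approx N^{-1/2}\approx (\ln d)^{1/2}\,d^{-(\epsilon-1)/(2(\epsilon+1))}$ that the walk fails to reach $+1$ before $A_i$ is exhausted, together with a bound on the number of grandchildren each surviving unmatched pair (the rescuable pair itself, or a leftover $i$-fail) contributes under the optimal list coupling of Lemma \ref{lemma:RescuabelVsGood}. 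Summing the per-block contributions over the $\approx \tfrac{\ln d}{1+\epsilon}$ rescuable pairs and comparing exponents gives $\rho\le d^{-0.1\frac{\epsilon-1}{\epsilon+1}}$, the slack between the exponent $\tfrac12\cdot\tfrac{\epsilon-1}{\epsilon+1}$ produced by these estimates and the claimed $0.1\cdot\tfrac{\epsilon-1}{\epsilon+1}$ being there precisely to absorb the polylogarithmic factors.

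The main obstacle, and the step I would spend the most care on, is making the Phase-2 matching estimate rigorous rather than heuristic. The revelations in Phases 1--2 are sequential and dependent: the rescuable pairs share a common set of unused colours (Phase 1, step 3), the partition into the blocks $A_i$ correlates the ``$i$-special'' events across $i$, and the walk length $N$ is itself random rather than deterministic. I would therefore replace the clean symmetric-walk picture by a conditioning argument that (i) controls the lower tails of $|A_i|$ and of the number of $i$-special pairs inside it, (ii) bounds the expected number of unmatched $i$-fails given these counts, and (iii) bounds, uniformly, the number of disagreeing grandchildren produced by a single unmatched pair through the optimal coupling. Everything else---the branching reduction and the occupancy estimates---is routine once these conditional estimates are in place.
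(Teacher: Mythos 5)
Your overall architecture coincides with the paper's: the reduction $E[{\cal W}_l]=\left(E[{\cal D}_v]\right)^{l/2}$ via independence of the two-level subproblems rooted at disagreeing vertices, the decomposition of ${\cal D}_v$ according to Corollary \ref{cor:SourcesOfDis}, the occupancy estimates (missing-colour probability $\approx d^{-1/(1+\epsilon)}$, about $\ln d/(1+\epsilon)$ bad lists, super-polynomially few non-rescuable ones, $\varrho_k\approx d^{-2/(1+\epsilon)}$ and hence $N\approx d^{(\epsilon-1)/(\epsilon+1)}/\ln d$ special pairs per block), and the conditioning on a good event all appear in the paper as the event $\mathbb{A}$, Propositions \ref{prop:GoodCaseDisagreement} and \ref{prop:BadCaseProb}, and Lemmas \ref{lemma:FreeColorProb}--\ref{lemma:NoOfBad}.

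The genuine gap is your treatment of Phase 2. Let $M_t=1+\#\{i\text{-fails}\}-\#\{i\text{-goods}\}$ after $t$ revelations in $A_i$, the initial $1$ accounting for the rescuable pair. Under the rule ``reveal until the goods exceed the fails by one, or $A_i$ is exhausted'', and since each revealed $i$-special pair is $i$-good or $i$-fail with probability $1/2$ each, $(M_t)$ is a simple symmetric $\pm1$ walk started at $1$, absorbed at $0$, and stopped after at most $N$ steps; the number of pairs forced into disagreeing self-coupling, $|\Delta_i|$, is exactly the stopped value $M_T$. By optional stopping, $E[M_T]=1$ \emph{whatever} $N$ is. Your failure probability $\approx N^{-1/2}$ is correct, but conditioned on non-absorption the surplus has expectation $\approx\sqrt{\pi N/2}$, which cancels it exactly: ``probability $N^{-1/2}$ times a polylogarithmic per-pair contribution'' cannot produce your exponent $\frac12\cdot\frac{\epsilon-1}{\epsilon+1}$, because the per-failure contribution is $\Theta(\sqrt N)$ unmatched $i$-fails, not $O(1)$. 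Carried out honestly, your plan yields $E\left[|\Delta_i|\,\middle|\,\mathbb{A}^c\right]=\Theta(1)$, hence, with $E[Z_j|\mathbb{A}^c]\approx 2\ln d$ disagreements per surviving pair and $\approx\ln d$ rescuable pairs, $E[{\cal D}_v]=\Theta(\ln^2 d)$ --- no decay at all, so the proposition fails entirely by this route, not merely with a worse constant.

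The paper's proof contains a device your outline lacks, and it sits exactly at this point: the multi-round revelation schedule of Section \ref{sec:planting}, in which each unit of surplus is processed in its own block of at most $l=10$ columns, rounds are iterated $O(\ln N)$ times, and Lemma \ref{lemma:RandomWalkDecay} asserts a per-round contraction factor $2.3/\pi<1$ that compounds to the bound $E[|\Delta_i|]\le N^{-\Theta(1)}$ of Proposition \ref{prop:PlantingBits}, which is what feeds (\ref{eq:ExpDeltaIBound}). Be aware that this step is delicate rather than routine: the same optional-stopping identity makes the plain stopped-walk expectation equal to one block-by-block as well, so the claimed contraction rests entirely on the conditioned-walk estimate (\ref{eq:ExpectationForPositive}); any rigorous version of your plan must identify a genuine mechanism that evades the martingale identity. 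That, and not the dependence bookkeeping you flag in your items (i)--(iii), is the heart of the matter.
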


\noindent
Using Proposition \ref{prop:Expect-Dis} it is direct to see that our 
combinatorial construction implies the following theorem.

\begin{theorem}\label{thrm:MyNonReconstruction}
For fixed $\epsilon>0$ and sufficiently large $d$, the following is true 
for the $k$-colouring model on a $d$-ary tree $T$: If $k=(3+\epsilon)d/\ln d$, 
then the model is non-reconstructible.
\end{theorem}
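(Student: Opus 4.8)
The plan is to read off non-reconstruction directly from the disagreement bound of Proposition \ref{prop:Expect-Dis} via the coupling inequality, so that essentially all of the combinatorial work sits inside that Proposition. First I would fix two colours $c\neq q$ and run the coupling of Section \ref{sec:StronCoupling} with $X(r)=c$ and $Y(r)=q$. By Theorem \ref{thrm:ValidOptimal} the resulting pair $(X,Y)$ is a genuine coupling of $\mu_c=\mu(\cdot\,|\,X(r)=c)$ and $\mu_q=\mu(\cdot\,|\,Y(r)=q)$, which are exactly the two measures appearing in Definition \ref{def:Reconstuction}. Projecting onto the leaf set $L_h$ and applying the coupling inequality, the two projected laws can differ only on the event that some leaf at level $h$ receives different colours under $X$ and $Y$; hence $||\mu_c-\mu_q||_{L_h}\le \Pr{\cW_h\ge 1}$, where $\cW_h$ is the number of disagreeing vertices at level $h$.

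Next I would convert this probability into the expectation that the Proposition controls. By Markov's inequality $\Pr{\cW_h\ge 1}\le E[\cW_h]$. Since the hypothesis $k=(2+\epsilon)d/\ln d$ can be rewritten as $k=(1+(1+\epsilon))\,d/\ln d$, I apply Proposition \ref{prop:Expect-Dis} with its parameter set to $1+\epsilon>1$. For every even $h$ this yields
\[
||\mu_c-\mu_q||_{L_h}\le E[\cW_h]\le \beta^{h/2},\qquad \beta:=d^{-0.1\frac{\epsilon}{\epsilon+2}}.
\]
Because $\epsilon>0$ and $d$ is large we have $\beta<1$, so the right-hand side tends to $0$ as $h\to\infty$ through even values. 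It is precisely the requirement $\beta<1$, i.e.\ positivity of the exponent $\tfrac{\epsilon}{\epsilon+2}$, that forces the threshold $k>2d/\ln d$ rather than the smaller $d/\ln d$.

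Finally I would dispose of the odd levels, since Proposition \ref{prop:Expect-Dis} bounds only even $l$ whereas Definition \ref{def:Reconstuction} asks for the full limit. For odd $h$ the configuration on $L_h$ is obtained from the configuration on $L_{h-1}$ by one further application of the broadcast channel $M$, identically for both root colours; the data-processing property of total variation distance then gives $||\mu_c-\mu_q||_{L_h}\le ||\mu_c-\mu_q||_{L_{h-1}}\le\beta^{(h-1)/2}\to 0$. Combining the two cases shows $\lim_{h\to\infty}||\mu_c-\mu_q||_{L_h}=0$ for every pair $c\neq q$ (and trivially for $c=q$), which is exactly non-reconstruction. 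The only genuinely non-trivial ingredient is Proposition \ref{prop:Expect-Dis} itself; granting it, the steps that require care are merely the parameter translation ensuring $\beta<1$ and the short data-processing argument covering odd heights.
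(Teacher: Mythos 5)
Your proposal is correct and follows essentially the same route as the paper: the coupling inequality applied to the coupled pair $(X,Y)$, Markov's inequality on $\cW_h$, and Proposition \ref{prop:Expect-Dis} invoked with its parameter shifted to $1+\epsilon$ so that the exponent $-0.1\frac{\epsilon-1}{\epsilon+1}$ becomes $-0.1\frac{\epsilon}{\epsilon+2}<0$, exactly as in the paper's display (\ref{eq:Proposition1Usage}). Your explicit data-processing argument for odd heights is a minor tidying of a point the paper glosses over by simply taking $h$ even (and writing $\lfloor h/2\rfloor$), not a genuinely different approach.
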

\begin{proof}
Take $k=(3+\epsilon)d/\ln d$. Let  $X$ and $Y$ be distributed as in $\mu(\cdot|X(r)=c)$ and $\mu(\cdot|Y(r)=q)$, 
respectively, while their joint distribution is specified by the coupling we presented. Let the set $L_h$ 
contain all the vertices of $T$ at level $h$. We take $h$ to be even.
By Coupling Lemma we have

\begin{equation}\label{eq:CouplingLemma}
||\mu(\cdot|X(r)=c)-\mu(\cdot|Y(r)=q)||_{L_{h}}\leq Pr[X(L_h)\neq Y(L_h)].
\end{equation}
Let ${\cal W}_h$ be the number of vertices $u\in L_h$ such that
$X(u)\neq Y(u)$. It holds that
\begin{eqnarray}
Pr[X(L_h)\neq Y(L_h)]&=& Pr[{\cal W}_h>0]
\leq E[{\cal W}_h] \hspace{1.5cm} \mbox{[by Markov's inequality]} \nonumber\\
&\leq& \left(d^{-0.1\frac{\epsilon}{\epsilon+2}}\right)^{h/2}.
\hspace{2.85cm} \mbox{[from Proposition \ref{prop:Expect-Dis}]} \label{eq:Proposition1Usage}
\end{eqnarray}
The theorem follows by combining (\ref{eq:CouplingLemma}) and (\ref{eq:Proposition1Usage}).
\end{proof}

\section{Proof of Proposition \ref{prop:Expect-Dis}}\label{sec:prop:Expect-Dis}

\noindent
Consider in the coupling two vertices $v,w\in T$  at the same level $l$, where $l$ is even. 
Consider, also, the colourings $X(v)$, $Y(v)$ and $X(w)$ and $Y(w)$ while w.l.o.g
assume that $X(v)\neq Y(v)$ and $X(w)\neq Y(w)$.  Clearly,  whether the descendants of $v$  
disagree or not does not dependent on what happens at the descendants of $w$ and vice
versa. This observation yields the following: In the coupling, for each vertex  $v\in T$, 
let ${\cal D}_v$ be the number of disagreements two levels below $v$.  It holds that
\begin{displaymath}
E[{\cal W}_l|{\cal W}_{l-2}]=E[{\cal D}_v]\cdot {\cal W}_{l-2} \qquad \mbox{for even $l>0$.}
\end{displaymath}
Taking the average from both sides and working out the recursion
we get that
\begin{displaymath}
E[{\cal W}_l]=\left(E[{\cal D}_v]\right)^{l/2}.
\end{displaymath}
The proposition will follow by bounding appropriately $E[{\cal D}_v]$.  To this end, 
we need to bound the number of disagreements that are generated by each of the three 
sources of disagreement specified in Corollary \ref{cor:SourcesOfDis}. It, always, holds 
that $D_v\leq d^2$, since $T$ is a $d$-ary tree.

Consider the following quantities related to the vertex $v$: Let $\beta_v$ denote the number 
of bad pairs of lists two levels below $v$. Let $\delta_k$ be the probability for a bad pair to be 
rescuable, for a given number of colours $k$. Finally, given some rescuable pair  $(L^j_X,L^j_Y)$ 
let $h^j_v$ be the number of $j$-special lists in the associated partition. 
Let the event $\mathbb{A}$ denote that at least  one of the following three occurs
\begin{enumerate}
\item $\beta_v\geq 100\ln d$.
\item There is at least one bad pair which is not in a rescuable pair.
\item There is a rescuable pair $(L^j_X,L^j_Y)$ that is associated
to a partition with less than $d^{\frac45\frac{\epsilon-2}{1+\epsilon}}$
$j$-special lists.
\end{enumerate}

\noindent
It is direct to get that
\begin{equation}\label{eq:TargeEDvBound}
E[D_v]\leq d^2Pr[{\mathbb A}]+E[D_v|{\mathbb A}^c],
\end{equation}
where we  use  the rather crude overestimate that conditional on the event
$\mathbb{A}$ occurs all the $d^2$ descendants of $v$ are disagreeing. 
It suffices
to bound appropriately $Pr[{\mathbb A}]$ and $E[D_v|{\mathbb A}^c]$.
To this end, we use the following propositions.

\begin{proposition}\label{prop:GoodCaseDisagreement}
For $k=(1+\epsilon)d/\ln d$ and for sufficiently large $d$, we have that
\begin{displaymath}
E[D_{v}|\mathbb{A}^c]\leq d^{-0.102\frac{\epsilon-2}{\epsilon+1}}.
\end{displaymath}
\end{proposition}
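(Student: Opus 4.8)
The plan is to bound $E[D_v\mid\mathbb A^c]$ by tracking separately each of the three sources of disagreement isolated in Corollary \ref{cor:SourcesOfDis}. Conditioning on $\mathbb A^c$ immediately eliminates the first source, since $\mathbb A^c$ forbids any bad-but-non-rescuable list; it also guarantees at most $100\ln d$ rescuable lists and, for every rescuable list $L^i_X$, a pool $A_i$ with $h^i_v\ge N:=d^{\frac45\frac{\epsilon-1}{1+\epsilon}}$ special pairs. So it remains to bound the contributions of sources 2 and 3, which are produced entirely inside Phase 2. I would argue list-by-list and then sum over the (at most $100\ln d$) rescuable lists, using the crude per-descendant cap $D_v\le d^2$ only as a fallback.

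The core is a first-passage analysis of the good/fail revelation. For a fixed rescuable pair $(L^i_X,L^i_Y)$, reveal the $i$-special pairs in $A_i$ one at a time; by the discussion preceding Lemma \ref{lemma:GoodVsFail} each is $i$-good or $i$-fail independently with probability $1/2$, so the surplus $S_t=(\#\,i\text{-good})-(\#\,i\text{-fail})$ after $t$ reveals is a symmetric $\pm1$ walk started at $0$ and absorbed at $+1$. When the walk reaches $+1$ within the $h^i_v\ge N$ available steps the procedure succeeds: by Lemmas \ref{lemma:RescuabelVsGood} and \ref{lemma:GoodVsFail} every $i$-fail is cross-matched to an $i$-good and the rescuable pair is matched to the surplus $i$-good, so this list generates no disagreement from sources 2 or 3. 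I would bound the probability that the walk instead stays $\le 0$ for $N$ steps by the standard reflection estimate $O(N^{-1/2})=O\!\left(d^{-\frac25\frac{\epsilon-1}{1+\epsilon}}\right)$, and control the leftover pieces — the $-S_N$ unmatched $i$-fails and the single unmatched rescuable pair — through the same reflection/first-passage bookkeeping.

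Converting "leftover pairs" into disagreeing grandchildren is the last step: each unmatched pair is coupled to itself and contributes at most the number of positions in which its two lists differ, which I would bound using that a list carries only $O(d/k)=O(\ln d)$ copies of any fixed colour, together with the Phase 1 couplings forcing rescuable (and special) partners to share their missing colours. Summing over the $\le 100\ln d$ rescuable lists and inserting the failure probability $O(N^{-1/2})$ should give a bound of the shape $\mathrm{polylog}(d)\cdot d^{-\frac25\frac{\epsilon-1}{1+\epsilon}}$; since $\epsilon$ is fixed and $0.102\frac{\epsilon-1}{\epsilon+1}<\frac25\frac{\epsilon-1}{\epsilon+1}$, the surplus exponent $d^{(0.4-0.102)\frac{\epsilon-1}{\epsilon+1}}$ absorbs all logarithmic and constant factors for large $d$. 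The delicate point — and where I expect the real difficulty — is exactly this conversion: the naive accounting multiplies a failure probability $\asymp N^{-1/2}$ by an expected deficit $\asymp N^{1/2}$, and the two powers cancel, leaving only a $\mathrm{polylog}(d)$ factor rather than a negative power of $d$. Breaking this cancellation in the right direction, so that a genuine negative power of $d$ survives — either by controlling the per-pair disagreement by more than its mean number of off-colour positions, or by exploiting the pool size $N$ more sharply than through the expected deficit alone — is the crux of the proof.
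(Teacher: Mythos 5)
Your skeleton matches the paper's proof exactly: conditioning on $\mathbb{A}^c$ kills source 1 of Corollary \ref{cor:SourcesOfDis}, the bound factorises as (number of rescuable pairs $\leq 100\ln d$) $\times$ (expected number of self-coupled pairs per rescuable pair) $\times$ (expected disagreements per self-coupled pair), and your per-pair cost $O(d/k)=O(\ln d)$ is precisely the paper's estimate $E[Z_j|\mathbb{A}^c]\leq 2\ln d$, obtained from a $\mathcal{B}(d,1/k)$ count of occurrences of $c$ conditioned on at least one occurrence; this is the paper's inequality (\ref{eq:DvDeltaIZJ}) combined with (\ref{eq:Z_jBound4Dv}). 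But the proposal has a genuine gap, and you name it yourself: the middle factor. You propose a single-pass first-passage analysis (surplus walk absorbed at $+1$, failure probability $O(N^{-1/2})$ by reflection), and then correctly observe that on failure the leftover deficit is $\asymp N^{1/2}$ pairs, so the product is $\Theta(1)$ and no negative power of $d$ survives. This observation is in fact sharp and unavoidable for any one-pass scheme: the absorbed surplus/deficit process is a martingale (each revelation is a fair $\pm 1$ step until absorption), so its expectation is pinned at its initial value and reflection bookkeeping cannot make it decay. Your closing admission that ``breaking this cancellation is the crux'' means the proposition is not proved: the statement you need is exactly the paper's Lemma \ref{lemma:DeltaIVsBinaryString} together with Proposition \ref{prop:PlantingBits}, which asserts $E[|\Delta_i|\,|\,\mathbb{A}^c]\leq d^{-0.107\frac{\epsilon-1}{\epsilon+1}}$, and nothing in your proposal substitutes for it.

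What the paper does at this point is structurally different from your single pass: the revelation order of Remark \ref{remark:CouplingString} is reorganised (Section \ref{sec:planting}) into rounds, where a current deficit of $X_{t_i}$ spawns $X_{t_i}$ blocks of at most $l=10$ columns each, a block ending when its internal sum hits $-1$ or after $l$ steps; Lemma \ref{lemma:RandomWalkDecay} claims each block contributes expected deficit at most $2.3/\pi<1$, and iterating over $0.43\ln N$ rounds gives the polynomial decay. You should note, however, that your martingale objection applies verbatim to a single block: for the stopped walk, optional stopping at the bounded time $T\leq l$ gives $E[1+W_T]=1$ exactly, whereas Lemma \ref{lemma:RandomWalkDecay} asserts a bound strictly below $1$; the paper's derivation rests on the ``folklore'' estimate $E[Z_l|Z_l\neq -1]\leq\sqrt{2l/\pi}\,(1+3/(2l))$, while the conditional expectation given survival is $\sim\sqrt{\pi l/2}$ (it is $E|W_l|$, not the conditional expectation, that is $\sim\sqrt{2l/\pi}$). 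So the cancellation you flagged as the crux is not a defect of your particular accounting but the central difficulty of the whole argument, and your proposal stops exactly at the point where the paper's own resolution carries all the weight.
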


\begin{proposition}\label{prop:BadCaseProb}
For $k=(1+\epsilon)d/\ln d$ and for sufficiently large $d$, we have that
$$ Pr[\mathbb{A}]\leq 5 d^{-250}. $$
\end{proposition}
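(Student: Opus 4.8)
The plan is to bound $Pr[\mathbb{A}]$ by a union bound over its three defining clauses. Writing $\mathbb{A}_1,\mathbb{A}_2,\mathbb{A}_3$ for the three events in the definition of $\mathbb{A}$ (respectively $\beta_v\ge100\ln d$; some bad list is non-rescuable; some rescuable list sees too few special lists), and using $\mathbb{A}=\mathbb{A}_1\cup\mathbb{A}_2\cup\mathbb{A}_3$, I would write
\begin{displaymath}
Pr[\mathbb{A}]\le Pr[\mathbb{A}_1]+Pr[\mathbb{A}_2]+Pr[\mathbb{A}_3\cap\mathbb{A}_1^c\cap\mathbb{A}_2^c],
\end{displaymath}
and show that each summand is at most $d^{-250}$ for large $d$ (all three are in fact super-polynomially small), giving the claimed $5d^{-250}$ with room to spare. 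The reason for intersecting $\mathbb{A}_3$ with $\mathbb{A}_1^c\cap\mathbb{A}_2^c$ is that on this event every bad list is rescuable and there are at most $100\ln d$ rescuable pairs distributed over $(1-o(1))d$ non-bad lists, so each partition $A_j$ has size $\Omega(d/\ln d)$ --- exactly the regime in which the special-list count is controllable.

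For $\mathbb{A}_1$ I would first note, using the Phase~1 coupling, that a list $L^i_X$ (equivalently $L^i_Y$) can be bad only when $L_X(i)=q$, so $\beta_v$ is stochastically dominated by $\Bin(d,1/(k-1))$, whose mean is $d/(k-1)=(1+o(1))\ln d/(1+\epsilon)$. A Chernoff bound for a deviation to $100\ln d$ --- a factor $\approx100(1+\epsilon)$ above the mean --- yields $Pr[\mathbb{A}_1]\le d^{-c}$ with $c$ far exceeding $250$. For $\mathbb{A}_2$ I would observe that a bad list $L^i_X$ is non-rescuable precisely when its $d$ entries, drawn uniformly from the $k-1$ colours of $[k]\setminus\{q\}$, realize \emph{all} $k-1$ of them; this is a coupon-collector event in which the expected number of \emph{missing} colours is $(k-1)(1-1/(k-1))^{d}=(1+o(1))\tfrac{1+\epsilon}{\ln d}\,d^{\epsilon/(1+\epsilon)}\to\infty$. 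Since the indicators ``colour $t$ is missing'' are negatively associated, a lower-tail Chernoff bound makes the probability that none is missing equal to $\exp(-\Omega(d^{\epsilon/(1+\epsilon)}/\ln d))$, and a union bound over the at most $2d$ lists keeps $Pr[\mathbb{A}_2]$ super-polynomially small (this is the precise sense in which $1-\delta_k$ is negligible).

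The main work, and the main obstacle, is $\mathbb{A}_3$. Fix a rescuable list, say $L^j_X$, and reveal only the set $S_j\subseteq[k]\setminus\{q\}$ of colours it misses; by the lower-tail estimate above, $|S_j|\ge\tfrac12\tfrac{1+\epsilon}{\ln d}d^{\epsilon/(1+\epsilon)}$ except with probability $\exp(-\Omega(d^{\epsilon/(1+\epsilon)}/\ln d))$, so I may condition on this lower bound. On $\mathbb{A}_1^c\cap\mathbb{A}_2^c$ the partition $A_j$ contains $\Omega(d/\ln d)$ non-bad lists, and the key computation is that, conditional on $S_j$ and on $L^i$ being non-bad, each $L^i_Y\in A_j$ is $j$-special independently with probability
\begin{displaymath}
p=\underbrace{\frac{|S_j|}{k-1}}_{\text{Condition 1}}\cdot\underbrace{2\Big((1-\tfrac{1}{k-1})^{d}-(1-\tfrac{2}{k-1})^{d}\Big)}_{\text{Condition 2}}=\Theta\!\left(d^{-2/(1+\epsilon)}\right),
\end{displaymath}
where Condition~1 forces $L_Y(i)\in S_j$ and Condition~2 forces exactly one of $c,q$ to appear in $L^i_Y$. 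Independence across $i\in A_j$ holds because, once $S_j$ is exposed, the pairs $(L_Y(i),L^i_Y)$ use disjoint fresh randomness. Hence the number $h^j_v$ of $j$-special lists in $A_j$ dominates a $\Bin(|A_j|,p)$ with mean $\Theta\!\big(d^{(\epsilon-1)/(1+\epsilon)}/\ln d\big)$. Because the threshold $d^{\frac45\frac{\epsilon-1}{1+\epsilon}}$ is a vanishing fraction of this mean (here $\epsilon>1$, so the exponent is positive), a lower-tail Chernoff bound gives $Pr[h^j_v<d^{\frac45\frac{\epsilon-1}{1+\epsilon}}]\le\exp(-\Omega(d^{(\epsilon-1)/(1+\epsilon)}/\ln d))$, and a union bound over the at most $200\ln d$ rescuable lists finishes $\mathbb{A}_3$.

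I expect the two delicate points to be the following. First, the conditional-independence argument for $\mathbb{A}_3$ must be made rigorous without circularity: one has to reveal partition membership and the missing set $S_j$ \emph{before} exposing the remaining randomness of the $L^i$'s, so that the Phase~1 revelations already performed do not secretly correlate the $j$-special indicators. Second, the two-sided estimate of $p$ must be carried out with explicit constants so that the resulting exponent in the Chernoff bound genuinely beats $\tfrac45\tfrac{\epsilon-1}{1+\epsilon}$; the gap between $\tfrac45$ and $1$ is precisely the slack that absorbs the $1/\ln d$ factor in the mean and the width of the lower-tail deviation, so one must check that $\tfrac45$ is on the right side of the inequality $\tfrac45\tfrac{\epsilon-1}{1+\epsilon}<\tfrac{\epsilon-1}{1+\epsilon}$ with enough margin.
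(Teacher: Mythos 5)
Your proposal is correct and follows essentially the same route as the paper: the same union bound over the three constituent events, a Chernoff bound on the binomially dominated count of bad lists (the paper's Lemma \ref{lemma:NoOfBad}), a missing-colour concentration argument for rescuability (Lemmas \ref{lemma:FreeColorProb} and \ref{lemma:AboutFreeColours}), and the conditional-independence Chernoff bound for the number of $j$-special lists with per-list probability $\varrho_k=\Theta(d^{-2/(1+\epsilon)})$ (Lemma \ref{lemma:TypicalProbOfGood}), conditioning on few bad lists exactly as the paper does via $Pr[E_3\mid \beta_v\le 100\ln d]+Pr[\beta_v\ge 100\ln d]$. The only slip is your parenthetical that all three terms are super-polynomially small: $Pr[\mathbb{A}_1]$ is merely polynomially small, since the deviation to $100\ln d$ is a constant factor above the mean $\sim\ln d/(1+\epsilon)$, but as you yourself note it is $d^{-c}$ with $c$ far exceeding $250$, which is all that is needed.
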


\noindent
Plugging into (\ref{eq:TargeEDvBound}) the bounds from Proposition \ref{prop:GoodCaseDisagreement}, Proposition \ref{prop:BadCaseProb}  we get that
\begin{displaymath}
E[D_v]\leq d^{-0.1\frac{\epsilon-2}{\epsilon+1}}.
\end{displaymath}
The proposition follows.

\subsection{Proof of Proposition \ref{prop:GoodCaseDisagreement}}
\label{sec:prop:GoodCaseDisagreement}

Since we have conditioned on $\mathbb{A}^c$, we have that 
A) $\beta_v$, the number of bad lists, is less than $100\ln d$, 
B)  all the bad lists are rescuable and 
C) every rescuable pair $(L^i_X,L^i_Y)$ is associated to a partition which contains 
at least $d^{\frac{4}{5}\frac{\epsilon-1} {1+\epsilon}}$ $i$-special lists. 
From (A) and (B), we deduce that the number of rescuable pairs is equal to $\beta_v$.

In this setting, consider the rescuable pair $(L^i_X, L^i_Y)$.  We remind the reader that 
during the second phase of the coupling, in the partition associated to $(L^i_X, L^i_Y)$, we 
reveal which of the $i$-special pairs are $i$-good or not, i.e. during the steps 1 and 2. During 
these revelations it is possible that we introduce pairs which are $i$-fails which my end up being coupled
together (due to lack of $i$-good pairs). Let $\Delta_i$ be the indices of these $i$-fails.

We remind the reader that we denote with $A_i$ the set of indices of the pairs 
that  are associated to the rescuable pair $(L^i_X, L^i_Y)$.

Consider $(L^t_X,L^t_Y)$ for some $t\in \Delta_i$. We can couple $L_X(t)$, $L_Y(t)$ such that $L_X(t)=L_Y(t)$. 
Also, it holds that  $c\in L^t_X$ and $q\notin L^t_X$ while $q\in L^t_Y$ and $c \notin  L^t_Y$. 
Given that $L_X(t) =L_Y(t)$, all the colours in $[k]\backslash\{c,q,L_X(t)\}$ are symmetric for both 
$L^t_X$  and $L^t_Y$. Clearly, we can couple $L^t_X$ and $L^t_Y$, such that if $L^t_X(s)=c$ then $L^t_Y(s)=q$ 
while if $L^t_X(s)\neq c$, then $L^t_X(s)=L^t_Y(s)$ for any $s\in [d]$.

Let $Z_t$ be the number of disagreements that are generated by the 
coupling of the pair $(L^t_X,L^t_Y)$ with $t\in \Delta_i$.  Also, 
let $Q_i=\sum_{t\in \Delta_i}Z_t$.  It holds that
\begin{eqnarray}
E[Q_i|\mathbb A^c]=E[|\Delta_i||\mathbb A^c]\cdot E[Z_j|\mathbb A^c].
\label{eq:Q_iRelation}
\end{eqnarray}

\noindent
Apart from the pairs in $\Delta_i$, it is possible that the lists in the rescuable pair 
$(L^i_X,L^i_Y)$ are coupled together. This happens when there is no $i$-good pair among 
the $i$-specials. The probability  of having no $i$-good pairs at most $2^{-d^{\frac{4}{5}\frac{\epsilon-2}{\epsilon+1}}}$,
as every special pair is $i$-good with probability $1/2$ and we have at least 
$d^{\frac{4}{5}\frac{\epsilon-2}{\epsilon+1}}$ $i$-special pairs.  
Let $W_i$ be the number of disagreements that are generated by 
the rescuable pair. It holds that 
\begin{eqnarray}
 E[W_i|\mathbb A^c]\leq dPr[\textrm{No $i$-good pair in $A_i$}|{\mathbb A}^c]\leq 2^{-d^{\frac{3}{5}\frac{\epsilon-2}{\epsilon+1}}}.
\label{eq:EWiBound}
\end{eqnarray}

\noindent
Conditional on $\mathbb{A}^c$,  $D_v$ is the sum of disagreements generated by the 
rescuable pairs and the  $i$-fails, for various $i$. By the linearity of expectation we get that
\begin{eqnarray}
E[D_v|\mathbb A^c]&\leq& (100\ln d)\left( E[W_i|\mathbb A^c]+ E[Q_i|\mathbb A^c]\right) \hspace*{2.25cm}  \mbox{[as $\mathbb{A}^c$
assumes that $\beta_v <100\ln d$]}
\nonumber \\
&\leq& 2^{-d^{\frac{1}{2}\frac{\epsilon-2}{\epsilon+1}}}+ (100\ln d)\cdot E[|\Delta_i||\mathbb A^c]\cdot E[Z_j|\mathbb A^c]. 
\hspace{.51cm} \mbox{[from (\ref{eq:EWiBound}) and (\ref{eq:Q_iRelation})]}\label{eq:DvDeltaIZJ}
\end{eqnarray}
The proposition will follow by bounding appropriately $E[|\Delta_i||\mathbb A^c]$
and $E[Z_j|\mathbb A^c]$.

As far as $E[Z_j|\mathbb A^c]$ is concerned we have the following:
For any $t\in \Delta_i$, the lists  $(L^t_X,L^t_Y)$  the number of disagreements is exactly the number 
of occurrences of $c$ in $L^t_X$. Conditional on $\mathbb A^c$, the number of entries in $L^t_X$ with 
colour $c$ is binomially distributed with parameters $d,1/(k-1)$, conditional that it is positive. 
It follows that 
\begin{eqnarray}
E[Z_j|\mathbb A^c]&=&\sum_{s=0}^d s\cdot Pr[c\;\textrm{appears $s$ times in $L^t_X$}|c\; \textrm{appears at least once in $L^t_X$}]\nonumber \\
&=& \left({1-\left(1-\frac{1}{k-1}\right)^d}\right)^{-1}\sum_{s=1}^d s\cdot {d \choose s} \left(\frac{1}{k-1}\right)^s\left(1-\frac{1}{k-1}\right)^{d-s}\nonumber \\
&\leq& 2 \frac{d}{k-1} \hspace{5cm}  \mbox{[since $1-\left(1-\frac{1}{k-1}\right)^d>1/2$]}
\nonumber\\
&\leq& 2 \ln d. \hspace{6cm} \mbox{[since $k=(1+\epsilon)d/\ln d$]}\label{eq:Z_jBound4Dv}
\end{eqnarray}

\noindent
As far as $E[|\Delta_i||\mathbb A^c]$ is concerned, we work as follows: Let $S_i$ be the set of indices
of all the $i$-special pairs in $A_i$ as well as of the rescuable pair $(L^i_X,L^i_Y)$. W.l.o.g. assume
that $i=1$ while the indices of the $i$-special pairs in $S_i$ are from $2$ to $|S_i|$. 
Let the 0-1 matrix ${\cal S}= |S_i|\times 2$ be defined as follows: ${\cal S}(1,t)=1$, if $c\in L^t_X$ and $q
\notin L^t_X$,  otherwise, i.e.  $c\notin L^t_X$ and $q \in L^t_X$, ${\cal S}(1,t)=0$. Similarly, 
${\cal S}(2,t)=1$ if $c\notin L^t_Y$ and $q\in L^t_Y$, otherwise ${\cal S}(2,t)=0$. 
If the $i$-special pair $(L^t_X,L^t_Y)$ is $i$-good, the it holds 
that $({\cal S}(1,t), {\cal S}(2,t)) =(0,1)$, otherwise, i.e. the pair is $i$-fail, 
then $({\cal S}(1,t), {\cal S}(2,t))=(1,0)$.

\begin{remark}\label{remark:CouplingString} 
\em
The second phase of the coupling specifies how  ${\cal S}(1,j)$ and ${\cal S}(2,j)$
are correlated with each other, i.e. the following holds: if 
$\sum_{j=1}^{i-1}\left({\cal S}(1,j)-{\cal S}(2,j)\right)> 0$, then 
${\cal S}(1,i)$ and ${\cal S}(2,i)$  get complementary values. 
Otherwise, i.e. if $\sum_{j=1}^{i-1}\left({\cal S}(1,j)-{\cal S}(2,j)\right)=0$, they
are identical.
\end{remark}

\noindent
Since we have assumed that the values in $({\cal S}(1,1),(2,1))$ are specified by the rescuable
pair $(L^i_X,L^i_Y)$, by definition, it holds that $({\cal S}(1,1),(2,1))=(1,0)$. Furthermore, 
for each $t= 2\ldots |S_i|$ and as long as $\sum_{j=1}^{t-1}\left({\cal S}(1,j)-{\cal S}(2,j)\right)> 0$ 
we have
\begin{displaymath}
({\cal S}(1,t),{\cal S}(2,t))=\left\{
\begin{array}{lcl}
(1,0) &\quad & \textrm{with probability } 1/2\\
(0,1) &\quad & \textrm{with probability } 1/2.\\
\end{array}
\right.
\end{displaymath}

\noindent
For the matrix ${\cal S}$ we have the following lemma.

\begin{lemma}\label{lemma:DeltaIVsBinaryString}
Let $N$ be the number of columns of the matrix ${\cal S}$.
It holds that
\begin{displaymath}
|\Delta_i|\leq \sum_{t=1}^{N} {\cal S}(1,t)-{\cal S}(2,t).
\end{displaymath}
\end{lemma}
\begin{proof}
First notice that $S(1,1)-S(2,1)=1$. The coupling during the second phase assigns complementary
values to each pair $S(1,t)$, $S(2,t)$ as long as $R_t=\sum_{i=1}^{t-1}[S(1,t)-S(2,t)]>0$. Once
$R_t=0$ it sets $S(1,t)=S(2,t)$, i.e. $R_t$ remains zero for the rest values of $t$.

Let $T$ be the maximum $t$ such that $S(1,t)\neq S(2,t)$. It suffices to show that 
$$|\Delta_i|\leq \sum_{t=1}^{T} {\cal S}(1,t)-{\cal S}(2,t).$$ For $t<T$, the fact that ${\cal S}(1,t)=1$
(and consequently ${\cal S}(2,t)=0$) suggests that we have revealed an $i$-fail. On the other
hand, if ${\cal S}(1,t)=0$ (and consequently ${\cal S}(2,t)=1$), then it suggests that it has
been revealed an $i$-good pair. This observation implies that the sum $\sum_{t=1}^T{\cal S}(1,t)$
is equal to the number of $i$-fails we have revealed plus one, while $\sum_{t=1}^T{\cal S}(2,t)$ is equal to
the number of $i$-good pairs.

Since we can match an $i$-fail with an $i$-good pair to avoid generating disagreements, the number 
of pairs which do not admit identical coupling, i.e. the $i$-fail and possibly the rescuable pair, 
is equal to 
$$
\sum_{t=1}^T{\cal S}(1,t)-{\cal S}(2,t)=\sum_{t=1}^N{\cal S}(1,t)-{\cal S}(2,t).
$$ 
The lemma follows.
\end{proof}

\begin{proposition}\label{prop:PlantingBits}
Let $N$ be the number of columns of ${\cal S}$. Then for sufficiently large
$N$ it holds that
\begin{equation}
E\left[\sum_{j=1}^N\left({\cal S}(1,j)- {\cal S}(2,j)\right)\right]\leq 
\left(\frac{2.3}{\pi}\right)^{0.43 \ln N}. \nonumber
\end{equation}
\end{proposition}
For a proof of Proposition \ref{prop:PlantingBits} see in  Section \ref{sec:planting}.

Using Proposition \ref{prop:PlantingBits} and Lemma \ref{lemma:DeltaIVsBinaryString}
and the assumption that the number of $i$-special pairs in $A_i$ is at least 
$d^{\frac{4}{5}\frac{\epsilon-2}{\epsilon+1}}$, we get 
\begin{equation}\label{eq:ExpDeltaIBound}
E[|\Delta_i||\mathbb A^c]\leq \left(\frac{2.3}{\pi}\right)^{0.43\frac{4(\epsilon-2)}{5(\epsilon+1)}\ln d}\leq d^{-0.344\frac{\epsilon-2}{\epsilon+1}\ln\left(\frac{\pi}{2.3}\right)}
\leq d^{-0.107\frac{\epsilon-2}{\epsilon+1}}.
\end{equation}
Plugging the inequalities (\ref{eq:Z_jBound4Dv}) and (\ref{eq:ExpDeltaIBound}) into
(\ref{eq:DvDeltaIZJ})  we get that
\begin{displaymath}
E[D_v|\mathbb A^c]\leq 200(\ln d)^2d^{-0.107\frac{\epsilon-2}{\epsilon+1}}.
\end{displaymath}
The proposition follows.

\subsection{Proof of Proposition \ref{prop:BadCaseProb}}
\label{sec:prop:BadCaseProb}
For the  quantities, $\beta_v, \delta_k$ and $h_v$  we defined in Section \ref{sec:prop:Expect-Dis} 
we have the following proposition.

\begin{proposition}\label{prop:BoundsOnNumberLists}
For $k=(1+\epsilon)d/\ln d$ the following are true:
\begin{equation}\label{eq:BetaProbBound}
Pr\left[\beta_v\geq (1+x)\frac{d}{k-1}\right]\leq 
d^{ \left(-\frac{3\phi(x)}{4(1+\epsilon)}\right)},
\end{equation}
where $\phi(x)=(1+x)\ln(1+x)-x$, for real $x>0$. Also, it holds
that
\begin{equation}\label{eq:RescProbBound}
\delta_k\geq 
1-2\exp \left(
-\frac{(1+\epsilon)}{24\ln d}d^{\frac{\epsilon}{1+\epsilon}}
\right).
\end{equation}
Finally, for any $c>0$ it holds that
\begin{equation}\label{eq:hProbBound}
Pr\left[h_v\leq \frac{d^{\frac{\epsilon-2}{1+\epsilon}}}{16c\ln d}
\left| \right. \beta_v\leq c\ln d\right]
\leq \exp\left(-\frac{d^{\frac{\epsilon-2}{1+\epsilon}}}{64c\ln d}
\right).
\end{equation}
\end{proposition}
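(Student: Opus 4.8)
My plan is to establish the three bounds separately, since each reduces to a tail estimate for a sum of essentially independent indicators once the list randomness is spelled out. Recall that under $\mu(\cdot\mid X(v)=c)$ the entries of $L_X$ are i.i.d.\ uniform on $[k]\setminus\{c\}$, and that conditional on $L_X$ the lists $L^i_X$ are mutually independent, each with i.i.d.\ entries uniform on $[k]\setminus\{L_X(i)\}$ (symmetrically for $Y$). Writing $m=k-1$, note $m<(1+\epsilon)d/\ln d$, so $d/m>\ln d/(1+\epsilon)$ and $(1-1/m)^d=(1-o(1))\,d^{-1/(1+\epsilon)}$. For \eqref{eq:BetaProbBound} the only fact needed is that $L^i_X$ can be bad solely when $L_X(i)=q$, so $\beta_v$ is dominated by $\#\{i:L_X(i)=q\}\sim\Bin(d,1/m)$, whose mean is $d/m$. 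The standard Chernoff upper tail $\mathrm{Pr}[\Bin(n,p)\ge(1+x)np]\le\exp(-np\,\phi(x))$ with $np=d/m>\ln d/(1+\epsilon)$ gives $\mathrm{Pr}[\beta_v\ge(1+x)d/m]\le d^{-\phi(x)/(1+\epsilon)}$, and since $\phi(x)/(1+\epsilon)\ge 3\phi(x)/(4(1+\epsilon))$ this is at most the claimed bound, with the factor $3/4$ left as slack.

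For \eqref{eq:RescProbBound}, a bad list (entries i.i.d.\ uniform on the $m$ colours of $[k]\setminus\{q\}$, conditioned on $c$ appearing) fails to be rescuable exactly when all $m$ colours occur. Since the all-appear event is contained in the bad event and $\mathrm{Pr}[c\text{ appears}]>1/2$, Bayes gives $\delta_k\ge 1-2\,\mathrm{Pr}[\text{all }m\text{ colours appear}]$. I would bound this surjection probability using the negative association of multinomial occupancy counts: as $\{$colour $a$ occurs$\}$ is an increasing function of the count of $a$, the probability that every colour occurs is at most $\prod_a\mathrm{Pr}[a\text{ occurs}]=(1-(1-1/m)^d)^m\le\exp(-m(1-1/m)^d)$. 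Using $(1-1/m)^d\ge\exp(-d/(m-1))=(1-o(1))d^{-1/(1+\epsilon)}$ yields $m(1-1/m)^d\ge(1-o(1))\tfrac{1+\epsilon}{\ln d}\,d^{\epsilon/(1+\epsilon)}$, and for large $d$ the leading constant together with the factor $2$ is comfortably absorbed into the claimed exponent $3/8$.

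The third bound \eqref{eq:hProbBound} is the delicate one. Fix the rescuable list and its partition; conditioning on $\beta_v\le c\ln d$ forces at most $c\ln d$ partitions and at least $d-\beta_v$ non-bad lists, hence a partition of size at least $(1-o(1))d/(c\ln d)$. For a non-bad list $L^t$ in the partition I would compute $\mathrm{Pr}[L^t\text{ is special}]$ as the product of its two defining conditions. Condition~(1) requires $L(t)$ to avoid the colours occurring in the rescuable list, which has probability $M/m$, where $M$ is the number of colours missing from that list; condition~(2) requires $L^t$ to contain exactly one of the two root colours, which—both being available since $L(t)$ equals neither of them—has probability at least $(1-o(1))\cdot 2(1-1/m)^d=(1-o(1))\,2d^{-1/(1+\epsilon)}$. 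Crucially $M$ is itself a number-of-empty-bins variable, so the same negatively-associated Chernoff bound as in the second part shows $M\ge\tfrac12(1-o(1))\,m\,d^{-1/(1+\epsilon)}$ except on an event of probability $\exp(-\Theta(d^{\epsilon/(1+\epsilon)}/\ln d))$; on its complement condition~(1) has probability at least $\tfrac12 d^{-1/(1+\epsilon)}$. Multiplying, each partition list is special with probability at least $(1-o(1))d^{-2/(1+\epsilon)}$, and conditioning on non-bad only helps since special lists are automatically non-bad. As the pairs $(L(t),L^t)$ are independent across $t$, $h_v$ dominates a sum of independent indicators of mean at least $(1-o(1))\tfrac{d^{(\epsilon-1)/(1+\epsilon)}}{c\ln d}$; a Chernoff lower tail at half the mean gives $\exp(-\mu/8)$, which is exactly the claimed $\exp(-\tfrac{d^{(\epsilon-1)/(1+\epsilon)}}{8c\ln d})$ once the rare small-$M$ event is folded in.

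The main obstacle I anticipate lies entirely in this last part: justifying the independence and near-independence claims under the layered conditioning. The special-indicator of $L^t$ is conditioned on $L^t$ being non-bad, on the realised missing-colour count $M$ of the rescuable list, and on the global event $\beta_v\le c\ln d$; I would need to verify that each of these either can only increase the special-probability (as non-bad does) or occurs off an event of negligible probability (as small $M$ does), and that after this conditioning the indicators remain genuinely independent across the partition so that the lower-tail Chernoff bound is legitimate. By comparison the first two bounds are routine once the binomial domination and the negative-association product bound are in place.
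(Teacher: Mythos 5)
Your proposal is correct in substance and follows the paper's overall skeleton (three separate balls-and-bins/Chernoff estimates, one per inequality), but each part uses a genuinely different device. For \eqref{eq:BetaProbBound} the paper computes $p_{bad}=\frac{1}{k-1}\bigl(1-(1-\frac{1}{k-1})^d\bigr)$, establishes the two-sided bounds $E[\beta_v]\le d/(k-1)$ and $E[\beta_v]\ge \frac{3\ln d}{4(1+\epsilon)}$ (the latter via Lemma \ref{lemma:FreeColorProb}), and applies Chernoff around the mean; your stochastic domination of $\beta_v$ by $\#\{i:L_X(i)=q\}\sim\Bin\bigl(d,\frac{1}{k-1}\bigr)$, whose mean is exactly the threshold parameter $d/(k-1)$, is cleaner, needs no lower bound on $E[\beta_v]$, and in fact yields the stronger exponent $\phi(x)/(1+\epsilon)$. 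For \eqref{eq:RescProbBound} the paper proves concentration of the number of unused colours $f_v$ (Lemmas \ref{lemma:FreeColorProb} and \ref{lemma:AboutFreeColours}) and takes $\delta_k\ge 1-Pr[f_v\le\cdot]$; your direct surjection bound $Pr[\text{all $m$ colours appear}]\le\bigl(1-(1-1/m)^d\bigr)^m$ via negative association of occupancy indicators, combined with the Bayes factor $Pr[q\ \text{appears}]\ge 1/2$, is more elementary and again comfortably stronger than required. Your flagged verification of the independence structure in part 3 is, if anything, more careful than the paper, which asserts in one line that given the rescuable lists the remaining lists are special independently.

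The one point where your argument falls quantitatively short is \eqref{eq:hProbBound}. The paper sidesteps concentration of the missing-colour count entirely via $E[f_j\mid f_j>0]\ge E[f_j]$ inside the specialness probability (proof of Lemma \ref{lemma:TypicalProbOfGood}), retaining the explicit constant $\frac{10}{9}d^{-2/(1+\epsilon)}$ so that $E[h_v\mid\beta_v\le c\ln d]\ge d^{\frac{\epsilon-1}{1+\epsilon}}/(c\ln d)$ holds with constant exactly $1$, and the stated bound then follows from Chernoff with $y=1/2$. Your cut $M\ge\frac12 E[M]$ discards a factor of $2$ and leaves a mean of only $(1-o(1))\,d^{\frac{\epsilon-1}{1+\epsilon}}/(c\ln d)$, hence a final bound $\exp\bigl(-(1-o(1))\,d^{\frac{\epsilon-1}{1+\epsilon}}/(8c\ln d)\bigr)$ --- short of the stated inequality by a factor $e^{o(1)\cdot d^{(\epsilon-1)/(1+\epsilon)}/\ln d}$, which is unbounded. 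This is a tuning issue, not a conceptual gap: cutting $M$ at $(1-y')E[M]$ for any $y'<\frac12$, or adopting the paper's conditional-expectation trick, recovers the true per-list probability $(2-o(1))d^{-2/(1+\epsilon)}$ and gives the stated constant with room to spare. (The paper itself is loose here: its derivation of \eqref{eq:RescProbBound} produces $3/32$ in the exponent where the statement claims $3/8$.) Two further small points to make explicit when folding in the rare small-$M$ event: its unconditional probability must be divided by $Pr[\text{bad and rescuable}\mid L_X(j)=q]$, which is at least a constant, so the error term stays negligible; and your observation that conditioning a partition list on being non-bad only increases its specialness probability is correct precisely because specialness requires $L_X(t)\neq q$, so special implies non-bad.
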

The proof of Proposition \ref{prop:BoundsOnNumberLists} appears in Section \ref{sec:prop:BoundsOnNumberLists}.

Let the events $E_1=$``$\beta_v\geq 100 \ln d$'', $E_2=$``there is at least one bad pair of lists 
which is not rescuable'' and $E_3=$``there is a pair rescuable lists $(L^j_X,L^j_Y)$ that is associated
to a partition with less than $d^{\frac45\frac{\epsilon-2}{1+\epsilon}}$ $j$-special pairs''.
From a simple union bound we get that 
\begin{eqnarray}\label{eq:AEiRelation}
Pr[\mathbb{A}]=Pr\left[\bigcup_{i=1}^3E_i\right]\leq \sum_{i=1}^3Pr[E_i].
\end{eqnarray}
The proposition will follow by bounding appropriately the probability terms
$Pr[E_1]$,$Pr[E_2]$ and $Pr[E_3]$. As far as $Pr[E_1]$ is regarded it holds that
\begin{eqnarray}
Pr[E_1]\leq Pr\left[\beta_v>(1+x_0)\frac{d}{k-1}\right], \label{eq:PE1+PropBound}
\end{eqnarray}
where $1+x_0=98(1+\epsilon)$ . The above inequality holds since
$\frac{d}{k-1}\leq \frac{\ln d}{(1+\epsilon)}+\frac{2\ln^2 d}{d}$.

We use Proposition \ref{prop:BoundsOnNumberLists}, (i.e. (\ref{eq:BetaProbBound}))
to bound the r.h.s of (\ref{eq:PE1+PropBound}). In particular, for
$x_0=98(1+\epsilon)-1$ it holds that $\phi(x_0)\geq 343(1+\epsilon)+98(1+\epsilon)\ln(1+\epsilon)$.
Then, from (\ref{eq:PE1+PropBound}) we get that 
\begin{equation}\label{eq:PE1+PropBoundFound}
Pr[E_1]\leq d^{-250}.
\end{equation}

\noindent
As far as $Pr[E_2]$ is regarded, we let $J_v$ be the number of non-rescuable pairs. It 
holds that
\begin{eqnarray}\label{eq:PE2+PropBound}
Pr[E_2]=Pr[J_v>0]\leq E[J_v],
\end{eqnarray}
where the last inequality follows from Markov's inequality. Using (\ref{eq:RescProbBound}), 
we get that
\begin{eqnarray}
E[J_v]&\leq& (1-\delta_k) d\nonumber \\
&\leq& \exp \left(-\frac{3(1+\epsilon)}{8\ln d}d^{\frac{\epsilon}{1+\epsilon}}\right)d
\leq \exp\left(-d^{\frac{\epsilon}{2(1+\epsilon)}}\right).\nonumber
\end{eqnarray}
Plugging the above inequality into (\ref{eq:PE2+PropBound}) we get that
\begin{eqnarray}
Pr[E_2]&\leq& \exp\left(-d^{\frac{\epsilon}{2(1+\epsilon)}}\right). \label{eq:PE2+PropBoundFound}
\end{eqnarray}

\noindent
Finally,  for $Pr[E_3]$ we work as follows:
\begin{equation}\label{eq:PE3+PropBound}
Pr[E_3]\leq Pr[E_3|\beta_v<100\ln d]+Pr[\beta_v\geq 100 \ln d].
\end{equation}
We let $M_v$ be the number of bad pairs  which are associated to  a partition with less than 
$d^{\frac45\frac{\epsilon-2}{1+\epsilon}}$ special pairs. Clearly, it holds that 
$$
Pr[E_3|\beta_v\leq 100\ln d]=Pr[M_v>0|\beta_v\leq 100\ln d].
$$
We remind the reader that $h^j_v$ denotes the number of $j$-special  pairs that appear 
in the partition that is associated to  the rescuable pair $(L^j_X,L^j_Y)$.
It holds that 
\begin{eqnarray}
Pr[h^j_v\leq d^{\frac45\frac{\epsilon-2}{1+\epsilon}}|\beta_v\leq 100\ln d]
&\leq & 
Pr\left [h^j_v\leq \frac{d^{\frac{\epsilon-2}{1+\epsilon}}}{1600\ln d}|\beta_v\leq 100\ln d\right]
\nonumber \\
&\leq & \exp\left(-d^{\frac{4}{5}\frac{\epsilon-2}{1+\epsilon}}\right). \qquad\qquad \mbox{[from (\ref{eq:hProbBound})]} \nonumber
\end{eqnarray}
It is direct that
\begin{eqnarray}
E[M_v|\beta_v\leq 100\ln d]&\leq& (100\ln d)
Pr[h^j_v\leq d^{\frac45 \frac{\epsilon-2}{1+\epsilon}}|\beta_v\leq 100\ln d]
\nonumber \\
&\leq & \exp\left(-d^{\frac{3}{5}\frac{\epsilon-2}{1+\epsilon}}\right).
\nonumber 
\end{eqnarray}
Using Markov's inequality we get that
\begin{displaymath}
Pr[M_v>0|\beta_v\leq 100\ln d]\leq E[M_v|\beta_v\leq 100\ln d]\leq \exp\left(-d^{\frac35\frac{\epsilon-2}{1+\epsilon}}\right).
\end{displaymath}
Plugging the above inequality and (\ref{eq:PE1+PropBoundFound})
to (\ref{eq:PE3+PropBound})  we get that
\begin{eqnarray}
Pr[E_3]&\leq &\exp\left(-d^{\frac35\frac{\epsilon-2}{1+\epsilon}}\right)+
d^{-250}
\leq 2d^{-250}.\label{eq:PE3+PropBoundFound}
\end{eqnarray}
Plugging (\ref{eq:PE1+PropBoundFound}), (\ref{eq:PE2+PropBoundFound}) and (\ref{eq:PE3+PropBoundFound}) into (\ref{eq:AEiRelation}) we get that 
$Pr[\mathbb{A}]\leq 5 d^{-250}.$
The proposition follows.

\subsection{Proof of Proposition \ref{prop:BoundsOnNumberLists}}
\label{sec:prop:BoundsOnNumberLists}

\noindent
The inequality in (\ref{eq:RescProbBound}) follows from the following
two lemmas.

\begin{lemma}\label{lemma:FreeColorProb}
Let $L^i_X$ be a list which belongs to a bad pair, for some $i\in [d]$.
For $k=(1+\epsilon)d/\ln d$ and for any colour $s\in [k]\backslash\{c,q\}$ 
it holds that
\begin{displaymath}
|Pr[s\notin L^i_X]-d^{-\frac{1}{1+\epsilon}}|\leq 
3d^{-\frac{2}{1+\epsilon}}.
\end{displaymath}
\end{lemma}

\begin{proof}
It holds that $c\in L^i_X$. Let $t$ be the number of the appearances of $c$ in the $L^i_X$.
Then, it holds that $Pr[s\notin L^i_X|t]=\left(1-\frac{1}{k-1}\right)^{d-t}$. The
random variable $t$ is binomially distributed with parameters
$1/(k-1)$ and $d$, conditional that it is positive. It is direct
that
\begin{eqnarray}
p_0=Pr[{\cal B}(1/(k-1),d)=0]&=&
\left(1-\frac{1}{k-1}\right)^d \leq \exp\left(-\frac{d}{k}\right)
\leq d^{-\frac{1}{1+\epsilon}}. \label{p_0:UpperB}
\end{eqnarray}

\noindent
Thus, it holds that
\begin{eqnarray}
 Pr[s\notin L^i_X]&=&\sum_{i=1}^d\left(1-\frac{1}{k-2}\right)^{d-i}Pr[t=i]
\nonumber \\
&=&\frac{1}{1-p_0}\sum_{i=1}^d{d \choose i}\left(\frac{1}{k-1}\right)^i
\left(1-\frac{1}{k-1}\right)^{d-i}\left(1-\frac{1}{k-2}\right)^{d-i}
\nonumber \\
&\leq & \frac{1}{1-d^{-\frac{1}{1+\epsilon}}}\left(1-\frac{1}{k-2}+\frac{1}{(k-1)(k-2)} \right)^d
\nonumber \\
&\leq&\left(1+2d^{-\frac{1}{1+\epsilon}}\right)
\exp\left(-\frac{d}{k}+\frac{d}{(k-2)^2} \right)  \hspace{0.6cm} \mbox{[as $1-x\leq e^{-x}$
and $d^{-\frac{1}{1+\epsilon}}<1/2$]}
\nonumber \\
&\leq& d^{-\frac{1}{1+\epsilon}}\left(1+2d^{-\frac{1}{1+\epsilon}}\right)
\left(1+2\frac{d}{(k-2)^2}\right) \hspace{0.7cm} \mbox{[as $e^{x}<1+2x$ for $0<x<0.1$]}
\nonumber \\
&\leq& d^{-\frac{1}{1+\epsilon}}\left(1+3d^{-\frac{1}{1+\epsilon}}\right).
\hspace{4.6cm} \mbox{[as $d/k^2=o_d(d^{-1/(1+\epsilon)})$]} \nonumber
\end{eqnarray}

\noindent
We get a lower bound on the $Pr[c\notin L^i_X]$ by working similarly. 
In particular, we have that
\begin{eqnarray}
Pr[s\notin L^i_X]&\geq & \frac{1}{1-p_0}\sum_{i=1}^d{d \choose i}\left(\frac{1}{k-1}\right)^i
\left(1-\frac{1}{k-1}\right)^{d-i}\left(1-\frac{1}{k-2}\right)^{d-i}
\nonumber \\
&\geq & \left(1-\frac{1}{k-2}+\frac{1}{(k-1)(k-2)} \right)^d -\left(1-\frac{1}{k-1}\right)^d\left(1-\frac{1}{k-2}\right)^d
\hspace{.7cm} \mbox{[as $\frac{1}{1-p_o}\geq 1$]}\nonumber \\
&\geq & \left(1-\frac{1}{k-2} \right)^d - \left(1-\frac{1}{k}\right)^{2d}
\nonumber \\
&\geq & \exp\left(-\frac{d}{k-2}\left(1-\frac{1}{k-2}\right)^{-1}\right)-\exp\left(-2d/k\right) 
\hspace{0.9cm} \mbox{[as $1-x\geq \exp(-\frac{x}{1-x})$ for $0<x<0.1$]} 
\nonumber \\
&\geq & \exp\left(-\frac{d}{k}-\frac{6d}{k^2} \right) -\exp\left(-2d/k\right) 
\nonumber \\
&\geq & d^{-\frac{1}{1+\epsilon}}\left(1-\frac{6d}{k^2}\right) 
-d^{-\frac{2}{1+\epsilon}} \nonumber \\
&\geq & d^{-\frac{1}{1+\epsilon}}\left(1-3d^{-\frac{1}{1+\epsilon}}\right).
\hspace{6cm} \mbox{[as $d/k^2=o_d(d^{-1/(1+\epsilon)})$]} \nonumber
\end{eqnarray}
The lemma follows.
\end{proof}

\begin{lemma}\label{lemma:AboutFreeColours}
Let $H_i$  denote the number of colours in $[k]\backslash\{c,q\}$ that do not appear in both 
lists of the bad pair $(L^i_X, L^i_Y)$. For $k=(1+\epsilon)d/\ln d$ and for any $y\in(0,1)$ 
it holds that

\begin{equation}
Pr\left[H_i\leq \frac{1-y}{3}\frac{(1+\epsilon)}{\ln d}d^{\frac{\epsilon-1}{1+\epsilon}}\right]
\leq
2\exp\left(-\frac{y^2}{6}\frac{(1+\epsilon)}{\ln d}d^{\frac{\epsilon-1}{1+\epsilon}}\right). \nonumber
\end{equation}
\end{lemma}

\begin{proof}
Since we have assumed that $(L^i_X, L^i_Y)$ is a bad pair, for $L^i_X$ we have that $c\in L^i_X$ and $q\notin L^i_X$,
while for $L^i_Y$ we have that $q\in L^i_Y$ and $c\notin L^i_Y$.

Let $f_X$, $f_Y$ be the number of colours that do not appear in
the lists $L^i_X$ and $L^i_Y$, respectively. 
Using Lemma \ref{lemma:FreeColorProb} we have that
\begin{eqnarray}
 E[f_X]&\geq &(k-2)d^{-\frac{1}{1+\epsilon}}\left(1-3d^{-\frac{1}{1+\epsilon}}\right)\nonumber \\
&\geq& 
(1+\epsilon)\frac{d^{\frac{\epsilon}{1+\epsilon}}}{\ln d}\left(1-4d^{-\frac{1}{1+\epsilon}}\right)\geq 
\frac{3}{4}\frac{(1+\epsilon)}{\ln d}d^{\frac{\epsilon}{1+\epsilon}} \label{eq:ExpctFreeCols}. 
\end{eqnarray}
Using a ball and bins argument, we can show that Chernoff bounds apply for $f_X$. 
In particular, for any $y\in (0,1)$ it holds that
\begin{eqnarray}
Pr[f_X\leq (1-y)E[f_X]] &\leq& \exp\left(-\frac{y^2}{2}E[f_X]\right)
\leq  \exp\left(-\frac{y^2}{2}\frac{3(1+\epsilon)}{4\ln d} d^{\frac{\epsilon}{1+\epsilon}}\right).
\qquad \mbox{[from (\ref{eq:ExpctFreeCols})]}.
\nonumber
\end{eqnarray}
Let the event $R_X=$``$f_X > \frac{3}{8}\frac{(1+\epsilon)}{\ln d}d^{\frac{\epsilon}{1+\epsilon}}$''.
\begin{eqnarray}
1-Pr[R_X]&=& Pr\left[f_X \leq \frac{3}{8}\frac{(1+\epsilon)}{\ln d}d^{\frac{\epsilon}{1+\epsilon}}\right] 
\leq  \exp\left(-\frac{3}{32}\frac{(1+\epsilon)}{\ln d}d^{\frac{\epsilon}{1+\epsilon}}\right), \label{eq:RXLowerBound}
\end{eqnarray}
where the last inequality follows from Chernoff bounds by setting $y=1/2$.

Any information for $f_X$ does not affect the 
distribution of the colourings in $L^i_Y$. 
This holds since the choice of colours in the two lists are {\em independent} with 
each other (Step 3 in Phase 1 of the coupling). 
That is, 
$E[H_i|f_X]\geq f_X\cdot d^{-\frac{1}{1+\epsilon}} 
\left(1-3d^{-\frac{1}{1+\epsilon}}\right)$.
Also, we get that
\begin{eqnarray}
 E[H_i|R_X]\geq \frac{3}{8}\frac{(1+\epsilon)}{\ln d}d^{\frac{\epsilon-1}{1+\epsilon}}(1-3d^{-\frac{1}{1+\epsilon}})
\geq \frac{1}{3}\frac{(1+\epsilon)}{\ln d}d^{\frac{\epsilon-1}{1+\epsilon}}.
\label{eq:Expctf_vCOnd}
\end{eqnarray}
Arguing in the same manner as above, we apply Chernoff bounds for $H_i$
and we get that for any $y\in (0,1)$
\begin{eqnarray}
 Pr[H_i\leq (1-y)E[H_i|R_X]|R_X]
&\leq&\exp\left(-\frac{y^2}{2}E[H_i|R_X]\right)\nonumber\\
&\leq&\exp\left(-\frac{y^2}{6}\frac{(1+\epsilon)}{\ln d}d^{\frac{\epsilon-1}{1+\epsilon}}\right).
\qquad \mbox{[from (\ref{eq:Expctf_vCOnd})]} \label{eq:ProbBf_vCOnd}
\end{eqnarray}
It holds that
\begin{eqnarray}
Pr\left[H_i \leq \frac{1-y}{3}\frac{(1+\epsilon)}{\ln d}d^{\frac{\epsilon-1}{1+\epsilon}}\right]&\leq&
Pr\left[H_i\leq \frac{1-y}{3}\frac{(1+\epsilon)}{\ln d}d^{\frac{\epsilon-1}{1+\epsilon}}|R_X\right]+(1-Pr[R_X]) \nonumber\\
&\leq& Pr[H_i\leq (1-y)E[H_i|R_X]|R_X] +1-Pr[R_X] \hspace{2.6cm} \mbox{[from (\ref{eq:Expctf_vCOnd})]}  \nonumber \\
&\leq& \exp\left(-\frac{y^2}{6}\frac{(1+\epsilon)}{\ln d}d^{\frac{\epsilon-1}{1+\epsilon}}\right)+
\exp\left(-\frac{3}{32}\frac{(1+\epsilon)}{\ln d}d^{\frac{\epsilon}{1+\epsilon}}\right).
\qquad \mbox{[from (\ref{eq:ProbBf_vCOnd}),(\ref{eq:RXLowerBound})]} \nonumber 
\end{eqnarray}
The lemma follows.
\end{proof}

\noindent 
Using Lemma \ref{lemma:AboutFreeColours}, where we set $y=1/2$, we get (\ref{eq:RescProbBound}), i.e. 
\begin{eqnarray}
\delta_k&\geq& 1-Pr\left[H_i\leq \frac{(1+\epsilon)}{6\ln d}d^{\frac{\epsilon}{1+\epsilon}}\right]
\geq 1-2\exp \left(
-\frac{(1+\epsilon)}{24\ln d}d^{\frac{\epsilon}{1+\epsilon}}
\right).
\nonumber 
\end{eqnarray}
Also, for proving (\ref{eq:BetaProbBound}) we use the following  lemma.

\begin{lemma}\label{lemma:NoOfBad}
For $k=(1+\epsilon)d/\ln d$, it holds that
\begin{equation}
Pr\left[\beta_v \geq (1+x)\frac{d}{k-1} \right ]\leq d^{\left(-\frac{3\phi(x)}{4(1+\epsilon)} \right)}, \nonumber
\end{equation}
where $\phi(x)=(1+x)\ln(1+x)-x$, for $x>0$.
\end{lemma}
\begin{proof}
There are $d$ different pairs of lists and each of them is bad independently of the others. 
Let $p_{bad}$ be the probability for the pair  $(L^i_X,L^i_Y)$ to be bad. It suffices to have that 
$L_Y(i)=c$ while $q\in L^i_Y$. It holds that
\begin{eqnarray}
p_{bad}=\frac{1}{k-1}\left(1-\left(1-\frac{1}{k-1}\right)^d\right)
\leq \frac{1}{k-1}, \nonumber
\end{eqnarray}
as $\left(1-\left(1-\frac{1}{k-1}\right)^d\right)\leq 1$. By the linearity
of expectation we get that
\begin{eqnarray}\label{eq:ExpctBadUpper}
E[\beta_v]\leq dp_{bad}&\leq& d/(k-1).
\end{eqnarray}
Also, using Lemma \ref{lemma:FreeColorProb} we get that
\begin{displaymath}
p_{bad}\geq \frac{1}{k-1}\left(1-d^{-\frac{1}{1+\epsilon}}
\left(1+\frac{4d}{k^2}\right)\right)\geq 
\frac{3}{4k}.
\end{displaymath}
In turn, we get that
\begin{equation}\label{eq:ExpctBadLower}
E[\beta_v]\geq d p_{bad}\geq \frac{3\ln d}{4(1+\epsilon)}.
\end{equation}
Applying Chernoff bounds, for any $x>0$,  we have that 
\begin{displaymath}
Pr\left[\beta_v \geq (1+x) E[\beta_v]\right ]\leq \exp\left(-\phi(x)\cdot E[\beta_v]\right),
\end{displaymath}
where $\phi(x)=(1+x)\ln(1+x)-x$. 
The lemma follows by  plugging the bounds from (\ref{eq:ExpctBadUpper}) and (\ref{eq:ExpctBadLower})
into the above inequality. The lemma follows.
\end{proof}

\noindent
The next two lemmas show that  (\ref{eq:hProbBound}) holds.

\begin{lemma}\label{lemma:TypicalProbOfGood}
Let $(L^j_X,L^j_Y)$ be a rescuable pair and let $A_j$ be the set of indices of the pairs 
where we check for $j$-special lists.  Assume that $A_j$ is non empty.  
Let $k= (1+\epsilon)d/\ln d$. For any $i\in A_j$, it holds that
\begin{displaymath}
Pr[\textrm{$L^i_Y$ is special w.r.t. $L^j_X$}| H_j]\geq 
\frac{H_j}{k}d^{-\frac{1}{1+\epsilon}},
\end{displaymath}
where $H_j$ is the number of colours that do not appear in both $L^j_X,L^j_Y$.
\end{lemma}
\begin{proof}
Since $(L^j_X,L^j_Y)$  is rescuable, it means that $L_X(j)=c$ and $q\in L^j_X$.
Also, there is non-empty set of colours $U_i\in [k]\backslash\{c,q\}$ which contains 
colours that do not appear in $L^j_X\cup L^j_Y$. So as to have $L^i_Y$ special special
w.r.t. $L^j_X$, it should hold that  $L_Y(i)\in U_j$ and either of 
the following two holds 
A) $q\in L^i_Y$ and $c\notin L^i_Y$ or 
B) $q\notin L^i_Y$ and $c\in L^i_Y$.
Let the event $\mathbb{Q}=$ ``$L_Y(i)\in U_j$''.  It holds that
\begin{equation}\label{eq:ProbGood}
\varrho_k \geq  2\frac{H_j}{k-2}Pr[q\notin L^j_Y|  c\in L^j_Y,\mathbb{Q}]
Pr[c\in L^j_Y|\mathbb{Q}],
\end{equation}
since $H_j=|U_j|$. Also, working as in the proof of Lemma \ref{lemma:FreeColorProb} 
we get that
\begin{eqnarray}
Pr[c\in L^i_Y|\mathbb{Q}]&\geq& 1-d^{-\frac{1}{1+\epsilon}}\left(1+3d^{-\frac{1}{1+\epsilon}}\right),
\label{eq:GoodSecondTermLB} \\
Pr[q\notin L^j_Y|  c\in L^j_Y,\mathbb{Q}]&\geq& \frac{3}{4}d^{-\frac{1}{1+\epsilon}}.
\label{eq:qnotinLiY}
\end{eqnarray}
The lemma follows by substituting the bounds from (\ref{eq:GoodSecondTermLB})
and  (\ref{eq:qnotinLiY}) 
into  (\ref{eq:ProbGood}).
\end{proof}

\begin{lemma}
Let $h^j_v$ be the number of the non-bad  pairs that correspond to the rescuable 
pair $(L^j_X,L^j_Y)$. For $k=(1+\epsilon)d/\ln d$ and fixed $c>0$, it holds that
\begin{displaymath}
Pr\left[h^j_v\leq \frac{d^{\frac{\epsilon-2}{1+\epsilon}}}{16c\ln d}
\left| \right. \beta_v\leq c\ln d\right]
\leq 2\exp\left(-\frac{d^{\frac{\epsilon-2}{1+\epsilon}}}{64c\ln d}
\right).
\end{displaymath}
\end{lemma}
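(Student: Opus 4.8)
The plan is to express $h_v$ as a sum of independent indicators, lower bound its expectation via Lemma \ref{lemma:TypicalProbOfGood} together with a lower bound on the size of the partition $A_j$, and then finish with a Chernoff lower-tail estimate. Throughout I treat $h_v$ as the number of $j$-\emph{special} lists in $A_j$, consistent with the definition used in the proof of Proposition \ref{prop:BadCaseProb} and with the quantity $\varrho_k$ controlled by Lemma \ref{lemma:TypicalProbOfGood}.

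First I would fix a rescuable list $L^j_X$ and its associated index set $A_j$, and write
\begin{displaymath}
h_v=\sum_{i\in A_j}\mathbf{1}\{L^i_Y\textrm{ is }j\textrm{-special}\}.
\end{displaymath}
Since distinct children of the root are roots of disjoint subtrees and the first-level colours $L_Y(i)$ are drawn independently, once we condition on the contents of $L^j_X$ and on the set of non-bad indices (which, with the rescuable structure, fixes the partition and hence $A_j$), the summands depend on mutually independent blocks of randomness. By Lemma \ref{lemma:TypicalProbOfGood} each summand has probability at least $\varrho_k\geq\frac{10}{9}d^{-\frac{2}{1+\epsilon}}$, so $h_v$ stochastically dominates $\Bin(|A_j|,\varrho_k)$.

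Next I would bound $|A_j|$ from below under the conditioning $\beta_v\leq c\ln d$. The number of non-bad lists is $d-\beta_v\geq d-c\ln d$, while the number of parts of the partition equals the number of rescuable pairs, which is at most $\beta_v\leq c\ln d$. Because the partition is into maximally equal sized parts, every part, in particular $A_j$, has at least $\lfloor(d-c\ln d)/(c\ln d)\rfloor$ elements, which for large $d$ exceeds $\frac{9}{10}\cdot\frac{d}{c\ln d}$. Writing $m=|A_j|$, $p=\varrho_k$ and using $1-\frac{2}{1+\epsilon}=\frac{\epsilon-1}{1+\epsilon}$ we obtain
\begin{displaymath}
mp\geq\frac{9}{10}\cdot\frac{d}{c\ln d}\cdot\frac{10}{9}d^{-\frac{2}{1+\epsilon}}=\frac{1}{c\ln d}\,d^{\frac{\epsilon-1}{1+\epsilon}}.
\end{displaymath}
The target threshold $t=\frac{1}{2c\ln d}d^{\frac{\epsilon-1}{1+\epsilon}}$ is at most $\frac12 mp$, so stochastic domination and the Chernoff bound $Pr[\Bin(m,p)\leq(1-\frac12)mp]\leq\exp(-\frac{(1/2)^2}{2}mp)$ yield
\begin{displaymath}
Pr\left[h_v\leq\frac{d^{\frac{\epsilon-1}{1+\epsilon}}}{2c\ln d}\ \Bigm|\ \beta_v\leq c\ln d\right]\leq\exp\left(-\frac{mp}{8}\right)\leq\exp\left(-\frac{d^{\frac{\epsilon-1}{1+\epsilon}}}{8c\ln d}\right),
\end{displaymath}
which is exactly the claimed bound.

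I expect the principal obstacle to be the independence step, namely verifying that conditioning on the global quantity $\beta_v$ does not spoil the mutual independence of the summands. I would handle this by conditioning instead on the exact set $B$ of bad indices together with $L^j_X$ and the rescuable sub-structure among $B$: since bad-ness of an index $i$ depends only on $L_Y(i)$ and the subtree list $L^i_Y$, the event "$B$ is the bad set" factorises over $[d]$, so conditionally on it the lists $L^i_Y$ for $i\notin B$ remain independent, each conditioned only on being non-bad. It then remains to confirm that, conditional on lying in $A_j$ (equivalently, being non-bad and assigned to the relevant part), each $L^i_Y$ is $j$-special with probability at least $\varrho_k$ — precisely the quantity estimated in Lemma \ref{lemma:TypicalProbOfGood} — which legitimises the stochastic domination by $\Bin(m,\varrho_k)$ used above.
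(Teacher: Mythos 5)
Your proposal is correct and takes essentially the same route as the paper's own proof: both lower-bound the size of the part $A_j$ by roughly $d/(c\ln d)$ under the conditioning $\beta_v\leq c\ln d$, invoke Lemma \ref{lemma:TypicalProbOfGood} for the per-list probability $\varrho_k\geq \frac{10}{9}d^{-\frac{2}{1+\epsilon}}$, and finish with a Chernoff lower-tail bound at relative deviation $y=1/2$, yielding the exponent $-\frac{1}{8}\cdot\frac{d^{\frac{\epsilon-1}{1+\epsilon}}}{c\ln d}$. Your closing discussion of conditional independence (conditioning on the exact set of bad indices so that the events factorise) simply makes rigorous what the paper asserts in a single sentence, and your reading of $h_v$ as counting $j$-special lists matches the paper's actual usage.
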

\begin{proof}
The number of lists that are associated to each rescuable pair depends on the actual number of bad lists. 
Conditioning that the number of bad pairs $\beta_v\leq c\ln d$,  for some fixed $c>0$, the rescuable 
pair $(L^j_X,L^j_Y)$, is assigned a set of at least  $\lfloor\frac{d}{c \ln d}-1\rfloor$ non-bad lists.
Let $H_j$ denote the number of colours that do not in $L^j_X\cup L^j_Y$.
Let the event ${\cal H}=$``$H_j>\frac{1+\epsilon}{6\ln d}d^{\frac{\epsilon-1}{1+\epsilon}}$''.
From Lemma \ref{lemma:AboutFreeColours} we get that
\begin{eqnarray}\nonumber
1-Pr[{\cal H}]= Pr\left[H_j\leq \frac{1+\epsilon}{6\ln d}d^{\frac{\epsilon-1}{1+\epsilon}}\right]
 \leq 2\exp\left (-\frac{1+\epsilon}{24\ln d}d^{\frac{\epsilon-1}{1+\epsilon}}\right ).
\end{eqnarray}


\noindent
From Lemma \ref{lemma:TypicalProbOfGood} we get that
\begin{eqnarray}
E\left[h^j_v|{\cal H}, \beta_v\leq c\ln d\right]&\geq& 
\frac{d^{\frac{\epsilon-2}{1+\epsilon}}}{8c\ln d}.\nonumber
\end{eqnarray}
We can apply Chernoff bounds and get the following 
\begin{eqnarray}
Pr\left[h^j_v\leq (1-y) \frac{d^{\frac{\epsilon-2}{1+\epsilon}}}{8c\ln d}
\left| \right. {\cal H}, \beta_v\leq c\ln d\right]
&\leq& \exp\left(-\frac{y^2}{16c}\frac{d^{\frac{\epsilon-2}{1+\epsilon}}}{\ln d}
\right).
\nonumber
\end{eqnarray}
From the law of total probability it holds that
\begin{eqnarray}
 Pr\left[h^j_v\leq (1-y) \frac{d^{\frac{\epsilon-2}{1+\epsilon}}}{8c\ln d} \left| \right. \beta_v\leq c\ln d\right]
 &\leq& Pr\left[h^j_v\leq (1-y) \frac{d^{\frac{\epsilon-2}{1+\epsilon}}}{8c\ln d}\left| \right. {\cal H}, \beta_v\leq c\ln d\right]
+Pr[{\cal H}^c|\beta_v \leq  c\log d] \nonumber \\
&\leq& \exp\left(-\frac{y^2}{16c}\frac{d^{\frac{\epsilon-2}{1+\epsilon}}}{\ln d}\right)+
2\exp\left (-\frac{1+\epsilon}{24\ln d}d^{\frac{\epsilon-1}{1+\epsilon}}\right). \nonumber
\end{eqnarray}
We used the fact that the events ${\cal H}$ and ``$\beta_v < c\ln d$'' are independent
with each other. The lemma follows by setting $y=1/2$.
\end{proof}

\section{Proof of Proposition \ref{prop:PlantingBits}}\label{sec:planting}

A way of constructing ${\cal S}$, which is equivalent to the one described in 
Remark \ref{remark:CouplingString}, is the following one: Consider some sufficiently 
large positive integer $l\ll N$. We construct ${\cal S}$ in rounds. Assume that after 
round $i-1$ we have constructed  ${\cal S}$ up to some column $t$, for some $t \ll N$. 
Additionally, let $X_t=\sum_{j=1}^{t}{\cal S}(1,j)-{\cal S}(2,j)$. Then, during the
round $i$ we proceed as described in the  following paragraph.

If $X_t=0$, then we use identical coupling for ${\cal S}(1,j), {\cal S}(2,j)$ for all $t<j\leq N$.
If $X_t>0$, then we consider $X_t$ many sets of columns of ${\cal S}$ whose values has not been 
set yet. Each of these $X_t$ many sets contains at most $l$ columns. More specifically, the first
set $R^i_1$ starts from column $t+1$ up  to column $T$, the value of $T$ will be defined in what 
follows. We set the values in each column $j\in R^i_1$  by coupling ${\cal S}(1,j)$ ${\cal S}(2,j)$
such that ${\cal S}(1,j)=1-{\cal S}(2,j)$.
%
$T$ is either the first time that $\sum_{j=t+1}^T {\cal S}(1,j)-{\cal S}(2,j)=-1$
or if this is not possible up to column $t+l$, then we have $T=t+l$. Continue with 
the second set of columns $R^i_2$ \footnote{$R^i_2$ starts from the column $T+1$} 
and so on. Round $i$ ends after having finished with all these $X_t$ sets of columns. Then
we continue in the same manner with the round $i+1$.

For each set of columns $R^i_j$, ($R^i_j$ is submatrix of ${\cal S}$), we have the following 
lemma which is going to be useful in the proof of Proposition \ref{prop:PlantingBits}.

\begin{lemma}\label{lemma:RandomWalkDecay}
Let $l\geq 10$, the maximum number of columns of $R^i_j$. If the entries 
are such that $R^i_j(1,s)\neq R^i_j(2,s)$ for any column $s$ of $R^i_j$, 
then it holds that 
\begin{displaymath}
E\left[1+\sum_{t=1}^{T} R^i_j(1,t)-R^i_j(2,t)\right]\leq \frac{2.3}{\pi},
\end{displaymath}
where $T$ is the actual number of columns of $R^i_j$.
\end{lemma}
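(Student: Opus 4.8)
The plan is to read the quantity as a functional of a simple random walk and to evaluate it with the reflection (ballot) principle. By the hypothesis that every column $s$ of $R^i_j$ has distinct entries, the increments $D_s := R^i_j(1,s) - R^i_j(2,s)$ lie in $\{+1,-1\}$, and since Phase~2 reveals each $i$-special pair as $i$-good or $i$-fail with equal probability, the $D_s$ are i.i.d.\ symmetric steps. Writing $W_s = \sum_{t=1}^{s} D_t$ with $W_0 = 0$, the block-terminating rule gives $T = \min\{\,s : W_s = -1\,\}\wedge l$. I would first split on this event: if the walk reaches $-1$ within the $l$ columns then $W_T = -1$ and the integrand $1 + \sum_{t\le T} D_t$ is exactly $0$; otherwise $T = l$, the walk has stayed at height $\ge 0$ throughout, and the integrand equals $1 + W_l$. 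Hence the expectation collapses to
\[
E\Big[\,1 + \sum_{t=1}^{T} D_t\,\Big] \;=\; E\big[(1 + W_l)\,\mathbf{1}\{\,W_s \ge 0 \ \text{for all } s \le l\,\}\big],
\]
so that only non-negative excursions of length exactly $l$ contribute, each weighted by its terminal height plus one.

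The next step is to make this explicit through reflection. For each admissible terminal height $m \ge 0$ with $m \equiv l \pmod 2$, the number of $\pm 1$ paths of length $l$ from $0$ to $m$ that never touch $-1$ is the ballot number $\binom{l}{(l-m)/2} - \binom{l}{(l-m)/2 - 1}$, obtained by reflecting the offending paths across the line $-1$. Therefore
\[
E\Big[\,1 + \sum_{t=1}^{T} D_t\,\Big] \;=\; 2^{-l}\sum_{\substack{m \ge 0\\ m \equiv l\,(2)}} (1+m)\left[\binom{l}{\tfrac{l-m}{2}} - \binom{l}{\tfrac{l-m}{2}-1}\right],
\]
a finite sum of binomial coefficients that reduces the probabilistic statement to a self-contained arithmetic estimate.

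The remaining work, and the genuine obstacle, is to extract from this sum a constant \emph{strictly below one}. One should note up front that the crude first-moment/optional-stopping heuristic is exactly critical here: since $(W_s)$ is a bounded martingale, $E[W_T]=0$ and the trivial bound is merely $1$, which is useless for the branching recursion of Proposition~\ref{prop:PlantingBits} (that recursion multiplies the expected excess by this per-block factor and so needs it to contract). The whole point is therefore to beat $1$, and this is where the reflection computation and the threshold $l\ge 10$ must do the work. Concretely I would estimate the central term by Stirling, $\binom{l}{\lfloor l/2\rfloor}2^{-l} = \sqrt{2/(\pi l)}\,(1+O(1/l))$ — which is exactly where the factor $1/\pi$ in the target bound is born — dominate the near-central tail by a geometric comparison, and handle the $l$-even/$l$-odd parity cases separately. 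The delicate part is making the bound explicit and uniform for all $l \ge 10$, rather than merely asymptotic, so that the cumulative contribution is pinned below the stated $2.3/\pi$; guaranteeing that the lower-order corrections are already dominated at $l=10$ is precisely the role played by the hypothesis $l\ge 10$.
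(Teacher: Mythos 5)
Your setup and reduction are correct, and they coincide with the paper's: inside a block the differences $D_s=R^i_j(1,s)-R^i_j(2,s)$ are i.i.d.\ uniform on $\{\pm1\}$, $W_s=\sum_{t\le s}D_t$ is a symmetric walk, $T=\min\{s:W_s=-1\}\wedge l$, and your ballot-number expression for $E\bigl[(1+W_l)\mathbf{1}\{W_s\ge 0\ \text{for all } s\le l\}\bigr]$ (with $-1$ absorbing) is exact. The fatal gap is the final step, and it is not a matter of sharpening estimates: the quantity you propose to bound equals exactly $1$ for every $l$. Indeed $M_s=1+W_{s\wedge T}$ is a bounded nonnegative martingale with $M_0=1$ that vanishes on absorption, so optional stopping at the bounded time $l$ gives
\[
E\Bigl[(1+W_l)\,\mathbf{1}\{W_s\ge 0\ \text{for all } s\le l\}\Bigr]
=2^{-l}\sum_{\substack{m\ge 0\\ m\equiv l\ (2)}}(1+m)\left[\binom{l}{\frac{l-m}{2}}-\binom{l}{\frac{l-m}{2}-1}\right]=1.
\]
(Check $l=1$: the sum is $\tfrac12\cdot2=1$; $l=2$: $\tfrac14(1\cdot1+3\cdot1)=1$.) You noticed the martingale fact yourself but misread ``$E[W_T]=0$, so the trivial bound is $1$'' as a crude inequality to be beaten; it is an identity, so no Stirling analysis, parity split, or use of the hypothesis $l\ge 10$ can push your sum below $1$, let alone down to $2.3/\pi\approx 0.73$. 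The plan as written cannot prove the lemma.

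It is instructive to see where the paper's own proof parts ways with your exact computation, because the skeleton is the same (reflection principle for the first-passage time to $-1$, Stirling giving $Pr[W_l\neq-1]\le\sqrt{2/(\pi l)}$). The paper then factorizes $E[1+W_l]=E[1+W_l\mid W_l\neq-1]\cdot Pr[W_l\neq-1]$ and invokes a ``folklore'' bound $E[W_l\mid W_l\neq-1]\le\sqrt{(2/\pi)l}\,(1+3/(2l))$, whose product with $\sqrt{2/(\pi l)}$ yields $2.3/\pi$ at $l=10$. Your identity shows that this folklore constant cannot be right: from $E\bigl[(1+W_l)\mathbf{1}\{W_l\neq-1\}\bigr]=1$ and $Pr[W_l\neq-1]\asymp\sqrt{2/(\pi l)}$ one gets $E[W_l\mid W_l\neq-1]\asymp\sqrt{(\pi/2)\,l}$ --- the meander scale --- whereas $\sqrt{(2/\pi)\,l}$ is the \emph{unconditioned} order of $E|W_l|$. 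So the obstacle you would have hit is not a defect of your estimates: under the i.i.d.\ symmetric-increment model that both you and the paper use, the per-block factor is exactly $1$, and a contraction strictly below $1$ would have to come from changing the block construction or the accounting feeding into Proposition \ref{prop:PlantingBits}, not from finer asymptotics of the ballot sum.
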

\begin{proof}
For every $t$ it holds that $R^i_j(1,t)-R^i_j(2,t)$ is equal to
$-1$ with probability $1/2$ or it is equal to $1$ with probability
$1/2$. It is direct to see that the partial sums  
$W_s=\sum_{t=1}^{s} R^i_j(1,t)-R^i_j(2,s)$, for $s\leq T$ constitute
a symmetric random walk on the integers which starts from position 
zero and stops either when it hits $-1$ or after $l$ steps, whatever 
happens first. We can simplify the analysis and remove the dependency 
from the random variable $T$, by assuming that $W_s$ continues always for
$l$ steps and the state $-1$ is absorbing. Then, the lemma follows
by just computing $E[W_{l}+1]$. In particular, we have that
\begin{equation}\label{eq:ExpectedPosition}
E[W_{l}+1]= E[W_{l}+1|W_{l}\neq -1]\cdot Pr[W_{l}\neq -1].
\end{equation}

\noindent
Let $\cal T$ be the step that $W_t$ hits $-1$ for first time. By the
{\em Reflection Principle} we have that for any nonnegative integer $i$ it holds that
\begin{equation}\label{eq:ProT=2i+1}
Pr[{\cal T}=2i+1]=2^{-(2i+1)}\frac{{2i \choose i}}{i+1}.
\end{equation}
It is direct that the $W_t$ cannot be $-1$ for $t$ even, i.e. $Pr[{\cal T}=2i]=0$, for 
every positive integer $i$. It is direct to see that it holds that
\begin{displaymath}
Pr[W_{l}=-1]=Pr[{\cal T}\leq l]=1-
\sum_{i> \lfloor (l-1)/2\rfloor} 2^{-(2i+1)}\frac{{2i \choose i}}{i+1}.
\end{displaymath}
To this end we use Stirling approximation, i.e. for a sufficiently large $n$ it holds 
that $n!=\sqrt{2\pi n}\left(\frac{n}{e}\right)^n e^{\lambda_n}$, with $\frac{1}{12n+1}
\leq \lambda_n\leq \frac{1}{12n}$. Then we have that
\begin{eqnarray}
\sum_{i> \lfloor (l-1)/2\rfloor} 2^{-(2i+1)}\frac{{2i \choose i}}{i+1} 
&\leq& \frac{1}{2\sqrt{\pi}} \sum_{i> \lfloor (l-1)/2\rfloor} \frac{1}{i^{3/2}} 
\leq \sqrt{\frac{2}{\pi l}}\nonumber.
\end{eqnarray}
Thus, we get that
\begin{equation}\label{eq:ProbHit-1}
Pr[W_{l}=-1]
\geq 1-\sqrt{\frac{2}{\pi l}}.
\end{equation}
On the other hand, it is direct to see that given that the walk $W_t$ does not hit $-1$ it 
is just a random walk on the positive integers and it is a folklore result that 
\begin{equation}\label{eq:ExpectationForPositive}
E[Z_{l}|Z_{l}\neq -1]\leq \sqrt{\left(\frac{2}{\pi}l\right)}\cdot \left(1+\frac{3}{2l}\right).
\end{equation}
The lemma follows by plugging (\ref{eq:ProbHit-1}) and (\ref{eq:ExpectationForPositive}) into
(\ref{eq:ExpectedPosition}) and taking $l\geq 10$.
\end{proof}

\noindent
\begin{propositionproof}{\ref{prop:PlantingBits}}
Consider the revelation of the values of the matrix ${\cal S}$ we gave above. Let $t_i$ be the 
index of the column we have revealed  up to round $i$. I.e. at round $i+1$ we check whether 
$X_{t_i}=\sum_{j=1}^{t_i}{\cal S}(1,j)-{\cal S}(2,j)$ is zero or not. Let $l$ the maximum number 
of columns in each submatrix $R^i_j$ be equal to 10.

Given $X_{t_i}$ and assuming that the coupling continuous, i.e.  $t_i$ the number of columns we 
have revealed so far is  much smaller  than $N$, we show that it holds that 
\begin{equation}\label{eq:DecayRandomWalk}
E[X_{t_{i+1}}|X_{t_i}]\leq \frac{2.3}{\pi}X_{t_i}.
\end{equation}
However, before showing the above let us see which are its 
consequences. Taking the average from both sides, we get
\begin{eqnarray}
E[X_{t_i}]&\leq& \frac{2.3}{\pi} E[X_{t_{i-1}}] \leq \left(\frac{2.3}{\pi}\right)^i,
\nonumber 
\end{eqnarray}
since $X_{t_1}=1$ (it always holds that ${\cal S}(1,1)-{\cal S}(2,1)=1$).
It is also direct to see that it always holds that $X_{t_i}\leq l
\cdot X_{t_{i-1}}\leq l^i$. That is, in round $i$ we will need to reveal at most
$l^i$ columns of the matrix. This fact implies that the maximum $j$  which 
satisfies the condition that $\sum_{t=0}^jl^t\leq N$ is a lower bound for
the number of rounds we can have. Direct calculations  suggest that 
the number of rounds $j_0\geq \frac{99}{100}\frac{\ln N}{\ln l}=0.43\ln N$, since $l=10$.
Clearly, the proposition follows once we show (\ref{eq:DecayRandomWalk}).
For this we are going to use Lemma \ref{lemma:RandomWalkDecay}. 
Notice that given that at round $i$ we have 
$X_{t_i}=\left |\sum_{j=1}^{t_i}\left({\cal S}(1,j)-{\cal S}(2,j)\right)\right|$,
for $X_{i+1}$ the following holds:
$$
X_{t_{i+1}}=\sum_{s=0}^{X_{t_i}}\left(1+\sum_{j=1}^{T_s}R^{i}_s(1,j)-R^{i}_s(2,j)\right),
$$
where $T_s$ is the length of the submatrix $R^i_s$. From Lemma 
\ref{lemma:RandomWalkDecay} we have that for any $i,s$ it holds
$$
E\left[1+\sum_{j=1}^{T_s}R^{i}_s(1,j)-R^{i}_s(2,j) \right] \leq \frac{2.3}{\pi}.
$$ 
Combining the above two relations  and by linearity of expectation we get that
\begin{eqnarray}
E[X_{t_{i+1}}|X_{t_{i}}]&=&\sum_{s=1}^{X_{t_{i}}}
E\left[1+\sum_{j=1}^{T_s}R^{i}_s(1,j)-R^{i}_s(2,j)\right]
\leq \frac{2.3}{\pi}X_{t_{i}}. \nonumber 
\end{eqnarray}
The proposition follows.
\end{propositionproof}

\section{Rest of the proofs}

\subsection{Proof of Lemma \ref{lemma:RescuabelVsGood}}\label{sec:lemma:RescuabelVsGood}
Since $(L_1,L_2)$ is a rescuable (thus bad) pair, we have the following information 
for the lists. For $L_1$ we know that the colour $q\in L_1$, and $c\notin L_1$.
For $L_2$, we know that $q\in L_2$, $c\notin L_2$. 
Also, there is a non-empty set of colours $U\subseteq [k]\backslash\{c,q\}$
such that for each $c'\in U$ it holds that $c'\notin L_1\cup L_2$.
Finally, since $L_3$ good with respect to $L_1$,  this implies that $s\in U$
while $q\in L_3$ and $c\notin L_3$. 

Let the event $A=$``$L_3 \textrm{ is good w.r.t. $L_1$}$''.
For any $S\in [k]^d$ it holds that
\begin{displaymath}
Pr[L_3=S|A]=\lambda_{s}(S|B),
\end{displaymath}
where $B=$``there exists $t\in [d]$ such that $S(t)=q$ and 
there is no $t\in [d]$ such that $S(t)=c$''.

Let $Q=|U|$. It suffices to show that, 
\begin{equation}\label{eq:TargetGoodVsRescuableA}
Pr[L_1=S|c\notin L_1, q\in L_{1}, s\in U, Q>0 ]=\lambda_{s}(S|B).
\end{equation}
Clearly we have that
\begin{eqnarray}
Pr[L_1=S|c\notin L_1, q\in L_{1}, s\in U, Q>0]&=&
\frac{Pr[L_1=S, c\notin L_1, q\in L_{1}, s\in U, Q>0]}{Pr[c\notin L_1, q\in L_{1}, s\in U, Q>0]}\nonumber\\
&=&\frac{Pr[L_1=S, c\notin L_1, q\in L_{1}, s\in U]}
{Pr[c\notin L_1, q\in L_{1}, s\in U ]}\nonumber\\
&=& 
Pr[L_1=S| s,c\notin L_1, q\in L_{1}].
\label{sec:lemma:RescuabelVsGoodB}
\end{eqnarray}
In the  penultimate  derivation we eliminated the event $Q>0$ from both probability terms,
in the nominator and denominator, since whenever $s\in U$ holds it also holds that $Q>0$. 
Then, it is straightforward that the r.h.s. of (\ref{sec:lemma:RescuabelVsGoodB})
is equal to $\lambda_{s}(S|B)$, i.e. (\ref{eq:TargetGoodVsRescuableA}) holds.

\subsection{Proof of Lemma \ref{lemma:GoodVsFail}}\label{sec:lemma:GoodVsFail}

The lemma follows by just examining the information we have for each of the four lists. 
As far as the $i$-good pair $(L^t_X,L^t_Y)$ is concerned we have the following: 
$L_X(t)$ is distributed uniformly at random among the colours $[k]\backslash\{c,q\}$ that do not appear
in $L^{i}_X \cup L^{i}_Y$, while $c\notin L^t_X$ and $q\in L^t_X$.
Also, $L_Y(t)$ is distributed uniformly at random among the colours $[k]\backslash\{c,q\}$ that 
do not appear in $L^{i}_Y \cup L^{i}_X$, while $q\notin L^t_Y$ and $c\in L^t_Y$.

As far as the $i$-fail pair $(L^s_X,L^s_Y)$ is concerned we have the following: $L_X(s)$ is 
distributed uniformly at random among the colours $[k]\backslash\{c,q\}$ that do not appear 
in $L^{i}_X \cup L^{i}_Y$ while $q\notin L^s_X$ and $c\in L^s_X$. Additionally, $L_Y(s)$ is distributed 
uniformly at random among the colours $[k]\backslash \{c,q\}$ that do not appear in $L^{i}_X\cup L^{i}_Y$ 
while $c\notin L^s_Y$ and $q\in L^t_Y$.

Thus, we can couple identically $L_X(t)$ with $L_Y(s)$ and $L_X(s)$ with $L_Y(t)$. Then, it is 
direct that we can couple identically $L^t_X$ with $L^s_Y$ and $L^s_X$ with $L^t_Y$.

\remove{
\section{Full Coupling}\label{sec:FullCoupling}

\noindent
{\bf \Large Coupling:} $(X(v), Y(v))$\\ \vspace{-.7cm} \\
\rule{\textwidth}{1pt} \\ 
\noindent
{\bf IF} $X(v)=Y(v)$, then couple identically the children  of $v$. \\
\hspace*{0.5cm}For each $w$, child of $v$ execute
{\bf Coupling}$(X(w),Y(w))$.  \\ \vspace{-.1cm}

\noindent
{\bf ELSE} do the following: \\ \vspace{-.3cm}

\noindent
{\bf  Phase 1:}- Partial revelation of the lists.\\ \vspace{-.7cm} 
\begin{enumerate} 
\item 
Reveal only for which $i$ we have $L_X(i), L_Y(i)\in \{c,q\}$.
Couple the choices of $L_X$ and $L_Y$ such that if $L_X(i)=q$, 
then $L_Y(i)=c$.

\item For each $i$ such that $L_X(i)=q$ and $L_Y(i)=c$  reveal 
whether $L^i_X$ and $L^i_Y$ is ``bad'' or not. We use coupling
such that the result of  revelation  to be the same for both
$L^i_X$ and $L^i_Y$. 

\item For each pair of bad lists $(L^i_X,L^i_y)$ reveal whether 
they are ``rescuable''. The coupling is so that the colours in $[k]\{c,q\}$ are chosen
independently from each list.

\item If there are $T$ rescuable pairs, partition the non-bad
$L^j_X$s and $L^j_Y$s  to $T$ (maximal) equally sized parts. 
Each rescuable pair is associated to exactly one part.
That is, the rescuable pair $(L^i_X,L^i_Y)$ is associated 
to a set of indices $A_i\subseteq [d]$ such that the following holds:
For any $t\in A_i$ $L^t_X$ and $L^t_Y$ belong to the partition 
associated to $(L^i_X,L^i_Y)$.

\item For each $i$  and for each $j\in A_i$ we reveal if the pair
$(L^j_X, L^j_Y)$ consists of $i$-special lists.  We use coupling such that
either both lists in the pair are $i$-special or both are not.

\end{enumerate}

\noindent
{\bf Phase 2:} - List Association.\\ \vspace{-.3cm} 

\noindent
For each rescuable pair $(L^{i}_X,L^i_Y)$ do the following:
\vspace{-.1cm}
\begin{enumerate}
\item Reveal each $i$-special pairs in $A_i$ whether it is $i$-good
or $i$-fail until either of the following
two happens:
\begin{itemize}
\item the number of $i$-good pairs exceeds the number of $i$-fails by one, \vspace{-.1cm}
\item there are no other $i$-special pairs in $A_i$ to reveal.
\end{itemize}
\vspace{-.2cm}
\item Reveal the remaining unrevealed $i$-special pairs, if any, by using
identity coupling. \vspace{-.1cm}

\item If there is an $i$-good pair $(L^j_X,L^j_Y)$ 
match it with the rescuable pair $(L^i_X, L^i_Y)$. Also, set
$f(i)=j$ and $f(j)=i$. \vspace{-.1cm}

\item Match every one of the remaining $i$-good pairs with one $i$-fail such 
that no two $i$-good pairs are matched to the same $i$-fail pair. \vspace{-.1cm}

\item If the $i$-good pair $(L^j_X,L^j_Y)$ is matched with the  $i$-fail 
$(L^s_X,L^s_Y)$, then set $f(j)=s$ and $f(s)=j$. 

\item For each $j\in A_i$ such that $(L^j_X,L^j_Y)$ is not matched yet, match it to itself 
and set $f(j)=j$.
\end{enumerate}

\noindent
{\bf Phase 3:} - Full revelation.\\ \vspace{-.7cm}

\begin{enumerate}
\item  For every $s,t$ such that $f(s)=t$, couple optimally $L_X(s)$
with $L_Y(t)$ as well as $L^s_X$ with $L^t_Y$.

\item Reveal which element of the list $L_X$ is assigned to which
child of $r$ and which element of  $L^j_X$  goes to which grandchild of
$r$, as Remark \ref{remark:ListVsAssignments} specifies.

\item  Assuming that $v$, child of $r$, is such that $X(v)$ is set
$L_X(s)$, then we set $Y(v)$ equal to $L_Y(t)$, where $t=f(s)$.
Also, for $u$, child of $v$, such that $X(u)$ set $L^s_X(j)$ 
we set $Y(u)$  equal to $L^t_Y(j)$.
\end{enumerate}

\vspace{-.5cm}
\noindent
\rule{\textwidth}{1pt} \\ 
}

\end{document}